\newtheorem{definition}{Definition}
\newtheorem{theorem}{Theorem}
\newtheorem{proposition}{Proposition}
\newtheorem{lemma}{Lemma}
\crefname{section}{\S}{\S\S}
\Crefname{section}{\S}{\S\S}
\newcommand{\telu}{$TE_{LU}$}
\newcommand{\temf}{$TE_{MF}$}
\begin{document}

%






%

\title{Centrality-based Middlepoint Selection for\\
Traffic Engineering with Segment Routing
}

\author{George~Trimponias,
        Yan~Xiao,
        Hong~Xu,
        Xiaorui~Wu,
        and~Yanhui~Geng
\thanks{G. Trimponias and Y. Geng are with Huawei Noah's Ark Lab (email: \href{mailto:g.trimponias@huawei.com}{g.trimponias@huawei.com}, \href{mailto:geng.yanhui@huawei.com}{geng.yanhui@huawei.com}). }
\thanks{Y. Xiao, H. Xu and X. Wu are with Department of Computer Science, City University of Hong Kong, Hong Kong, China (email: \href{mailto:yanxiao6-c@my.cityu.edu.hk}{yanxiao6-c@my.cityu.edu.hk}, \href{mailto:henry.xu@cityu.edu.hk}{henry.xu@cityu.edu.hk}, \href{mailto:xiaoruiwu3-c@my.cityu.edu.hk}{xiaoruiwu3-c@my.cityu.edu.hk}).}
}


\maketitle

\begin{abstract}

Segment routing is an emerging technology to simplify traffic engineering
implementation in WANs. It expresses an end-to-end logical path as a sequence
of segments through a set of middlepoints. Traffic along each segment is
routed along shortest paths. In this paper, we study practical traffic
engineering (TE) with segment routing in SDN based WANs. We consider two
common types of TE, and show that the TE problem can be solved in weakly
polynomial time when the number of middlepoints is fixed and not part of the
input. However, the corresponding linear program is of large scale and
computationally expensive. For this reason, existing methods that work by
taking each node as a candidate middlepoint are inefficient. Motivated by
this, we propose to select just a few important nodes as middlepoints for all
traffic. We use node centrality concepts from graph theory, notably group
shortest path centrality, for middlepoint selection. Our performance
evaluation using realistic topologies and traffic traces shows that a small
percentage of the most central nodes can achieve good results with orders of
magnitude lower runtime.

\end{abstract}

\begin{IEEEkeywords}
Segment Routing, Traffic Engineering, Graph Centrality, Software Defined Networking
\end{IEEEkeywords}


\section{Introduction}
\label{sec:intro}

Traffic engineering (TE) is an important task for network operators to improve network efficiency and application performance. TE is commonly exercised in a wide range of networks, from carrier networks \cite{HVSB15,FT00} to data center backbones \cite{JKMO13,HKMZ13}. Increasingly, TE is implemented using SDN (Software Defined Networking) given its flexibility. Notable examples include Google's B4 \cite{JKMO13} and Microsoft's SWAN \cite{HKMZ13}. 


Implementing TE in the data plane requires a large number of flow table entries on switches. This is because each switch on the path needs to have an entry for a demand, i.e. ingress-egress switch pair, to forward its traffic to the next hop, and for a large-scale network there can be many demands. Commodity switches, on the other hand, have very limited capacity for flow entries (usually 1-2 thousands of entries \cite{CMTY11,KPK15}) due to the expensive TCAM (Ternary Content Aware Memory) hardware needed \cite{CMTY11,CLNR14,MPEK16}. The use of wildcarding could reduce the number of flow entries, but it is often undesirable as it reduces the ability to implement demand-level policies and monitoring. 
Therefore it has become a major challenge to practically implement TE on commodity SDN switches. 


Segment routing \cite{segment1, segment2, segment3} is a recently proposed routing architecture to tackle this challenge. 
Its key idea is to perform routing based on a sequence of logical segments formed by some {\em middlepoints} between the ingress and egress nodes. A segment is the logical pipe between two middlepoints that may include multiple physical paths spanning multiple hops, and ECMP is used to load balance traffic among these paths.  
Now with segment routing instead of end-to-end paths, intermediate switches only need to know how to reach middlepoints in order to forward packets. They no longer need to maintain per-demand routing information which scales quadratically with the number of nodes.
Thus segment routing has the potential to greatly reduce the overhead and cost of TE \cite{bhatia2015optimized,HVSB15}. 

Segment routing has been explored with TE in some existing work. For example Bhatia et al. \cite{bhatia2015optimized} apply 2-segment routing to TE, where any logical path contains only one middlepoint and thus two segments. Hartert et al. \cite{HVSB15} propose some heuristics to solve various TE problems with segment routing. 
There lacks a thorough exploration and understanding of applying segment routing to TE, particularly the hardness of the resulting TE problem, and the development of practical TE algorithms with segment routing.  

In this paper, we study practical TE with segment routing in SDN based WANs. We first focus on some theoretical aspects of TE with segment routing. We consider two common types of TE: $TE_{MF}$ that maximizes total throughput based on multi-commodity flow, and $TE_{LU}$ that minimizes the maximum link utilization. $TE_{MF}$ is mostly for data center backbone WANs \cite{JKMO13,HKMZ13}, and $TE_{LU}$ mostly for carrier networks \cite{HVSB15,FT00}. 
We analyze study segment routing that uses only shortest paths between two segments in \cref{sec:opt}, and prove that both TE problems can be solved in (weakly) polynomial time as an LP when the number of middlepoints per path is fixed and not part of the input. Our results thus provide a theoretical foundation for existing work that focuses on shortest path based segment routing \cite{bhatia2015optimized,HVSB15}. 

We next focus on the practical problem of how to choose a small but representative set of middlepoints in order to solve TE with shortest path based segment routing (\cref{sec:2pass}). 
Existing approaches \cite{bhatia2015optimized} assume that for each demand, every node in the network is potentially a middlepoint candidate, and formulate it as part of the TE problem. This causes the TE to be of a very large scale, which makes it computationally expensive to solve for practical purposes. 
As we show in \cref{sec:vsTE}, it cannot be solved by the {ECOS solver} \cite{D13} after three hours on a medium topology with 100 nodes and 1500 commodity flows, while in practice TE often needs to be re-computed at the granularity of 10 minutes \cite{JKMO13,HVSB15,HKMZ13,LKMZ14}. 

We thus propose to apply the centrality concept from graph theory and network analysis  \cite{Newman2010} to select a few middlepoints to route all traffic in the network. 
Centrality was first developed in social network analysis \cite{F1977,b87} to determine the most influential nodes in a social graph. 
In the context of routing, centrality can be naturally viewed in terms of the node importance when routing the demands along the admissible paths. 
We explore several centrality definitions based on the network topology only, such as shortest-path, group shortest-path, and degree centralities, and apply them to middlepoint selection in networks. We also introduce weighted variants that additionally take into account the link capacities.

We conduct comprehensive performance evaluation of centrality based middlepoint selection using real topologies and traffic traces. Our results demonstrate that only a small percentage of around 2.5\%--7\% of the most central nodes can achieve good TE performance with orders of magnitude lower runtime. Using centrality based middlepoint selection methods, one can solve TE problems with up to 3000 flows on a 161-node topology in less than 3 minutes. We also observe that group shortest-path consistently outperforms other centralities for middlepoint selection, and may be used as the sole solution in practice for simplicity.  

\section{A Primer on Segment Routing }
\label{sec:bg}
We start by introducing segment routing and the benefit of applying it to TE. We next explain related work on segment routing.


Segment routing \cite{segment1, segment2, segment3} is a recently proposed architecture based on source routing that facilitates packet forwarding via a series of segments. It can be directly applied to MPLS and IPv6.
The key idea is that the ingress switch can break up the end-to-end logical path into segments, and specify this logical path as a series of {\em middlepoints} to traverse.
Figure~\ref{SR} illustrates an example of segment routing. The ingress switch S embeds a stack of segment labels (MPLS labels for example) into the packet header to specify the entire path. Note here each label just represents a middlepoint in the network, i.e. we consider node segments here \cite{segment1, segment2, segment3}. The top label is the active label that instructs packet forwarding. Then the packet is sent to the next label M$_1$ along the shortest path(s). ECMP is used if there are multiple shortest paths. When the packet reaches M$_1$, the top label M${_1}$ is popped and the packet is routed to the next label M$_2$. Finally, all the labels are popped and the packet arrives at the egress switch D.


\begin{figure}[H]
\begin{center}
\includegraphics [scale = 0.6]{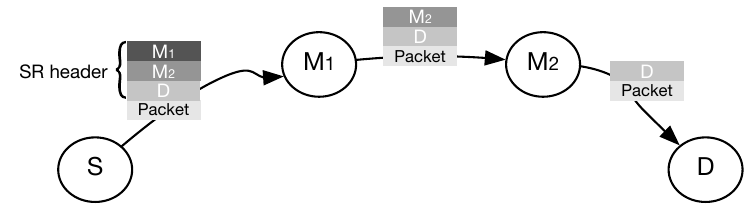}
\caption{A segment routing example from S to D through middlepoints M$_1$ and M$_2$.}
\label{SR}
\end{center}
\vspace{-4mm}
\end{figure}

One key advantage of segment routing is that it can greatly reduce routing cost in terms of number of flow table entries required. To see this, consider the next example shown in Figure~\ref{fig:showcase}. Three {\em demands}\footnote{When it is clear from the context, we use the terms {\em commodities}, {\em demands}, and {\em flows} interchangeably.}, which refer to the aggregated flows between a unique ingress-egress switch pair, are routed through three paths P1, P2, and P3 to their respective destinations. With tunnel-based forwarding in SDN \cite{JKMO13,HKMZ13}, each intermediate switch needs to store flow entries for each demand, and in total 12 entries are needed as shown in Table~\ref{table:rules}. Now if segment routing is applied with node E as the middlepoint, the three paths can be represented using just two labels each as in Figure~\ref{fig:showcase}, and switches C and D only need to have one entry in order to forward to the middlepoint E. The total number of entries is reduced to only 8, with 33.3\% saving.

\begin{figure}[htp]
\centering
\includegraphics[width=0.9\linewidth]{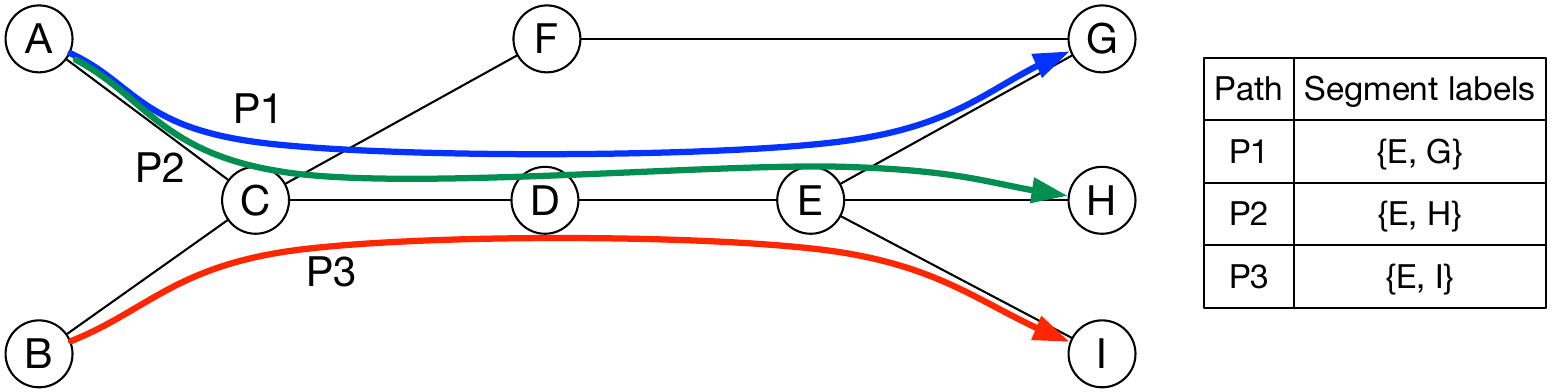}
\caption{An example where segment routing saves flow table entries. }
\label{fig:showcase}
\end{figure}

\begin{table}[h]
\centering
\label{table:rules}
\resizebox{0.7\columnwidth}{!}{%
\begin{tabular}{|c|c|c|}
\hline
Node & w.o. segment routing  & w. segment routing \\ \hline
A & 2 & 2 \\ \hline
B & 1 & 1 \\ \hline
C & 3 & 1 \\ \hline
D & 3 & 1 \\ \hline
E & 3 & 3 \\ \hline
\end{tabular}
}
\vspace{2mm}
\caption{Number of flow table entries for the example in Fig.~\ref{fig:showcase}. }
\end{table}

Given the potential of segment routing, some recent work has started to investigate how to apply it in TE.
In \cite{bhatia2015optimized},  the authors propose solutions for determining the optimal TE with segment routing and ECMP. In their scheme, they regard all nodes except for the source and the destination as candidate middlepoints, and split flows across exactly one middlepoint. Although they limit the number of middlepoints to one for each logical path, since they consider all the intermediate nodes for one demand, the search space for middlepoints is very large. The algorithm thus cannot scale to handle medium to large scale networks.
Hartert et al. \cite{hartert2015solving,HVSB15} studied a similar TE problem with segment routing under a constraint programming framework. Their middlepoint selection method also takes every node as a potential candidate on a per-demand basis, and they have to resort to heuristics to reduce the run time of the algorithm.

The exploration of segment routing in TE has been ad-hoc so far.
Existing work uses heuristics \cite{HVSB15} or some special for of segment routing \cite{bhatia2015optimized} without theoretical justification.
We are thus motivated to conduct a systematic study of segment routing in TE, including the theoretical characterization of the hardness results of various forms of segment routing, and practical algorithms of solving the problems.

Given our focus on TE, in the next section we review the two common types of TE formulations. We subsequently reveal an interesting connection between them, which we utilize later in the proof of several hardness results.

\section{Background on Traffic Engineering}
\label{sec:TE}
In our work, we focus on two common types of traffic engineering, depending on the objective criterion.  The first type maximizes the total throughput subject to the capacity and maximum demand constraints. Since it can be formulated as a maximum flow problem, we call it $TE_{MF}$. The second type minimizes the maximum link utilization, which acts as the system bottleneck. For this reason, we call it $TE_{LU}$. 



\subsection{Preliminaries}
\label{sec:TE-prelim}
Assume a directed graph $G=(V,E)$, where $V$ is the set of nodes and $E$ the set of directed edges. Given a node $v\in V$, $v^+$ denotes the set of outgoing edges of node $v$, i.e., the subset of edges in $E$ of the form $(v,u)$, $u\in V$. Similarly, the set $v^-$ denotes the set of incoming edges of $v$ of the form $(u,v)$, $u\in N$. The out-degree of $v$ is defined as the cardinality $|v^+|$, whereas the in-degree is defined as the cardinality $|v^-|$.

A \textit{flow network} $G=(V,E,c)$ is defined as a directed graph $G=(V,E)$, together with a non-negative function $c:V\times V\rightarrow\mathbb{R}_{\geq 0}$ that assigns to each edge $e\in E$ a non-negative capacity $c(e)$. If $(u,v)\not\in E$, then we define $c(u,v)=0$.

A \textit{walk} in a directed graph is an alternating sequence of vertices and edges, $v_0$, $e_0$, $v_1$, $\dots$, $v_{k-1}$, $e_{k-1}$, $v_k$, which begins and ends with vertices and has the property that each $e_i$ is an edge from $v_i$ to $v_{i+1}$. A \textit{path} is a walk where all edges are distinct. 
The term $u$-$v$ path refers to any valid path from $u$ to $v$.

In flow networks, we usually distinguish between single-commodity and multi-commodity flows. For single-commodity flow, we consider a single commodity that consists of a source $s\in V$ and a sink $t\in V$, where $s\neq t$. 
For multi-commodity flows, we assume $L$ commodities of the form $(s_i,t_i)$, where $s_i,t_i\in V, s_i\neq t_i$. Each commodity $i$ is associated with a non-negative demand $D_i\geq0$. 

\subsection{TE Type 1: $TE_{MF}$}
\label{sec:TE-MCF}

Let $\mathcal{P}_i$ be the set of all $s_i$-$t_i$ paths, and $\mathcal{P}_{i,e}$ the set of all $s_i$-$t_i$ paths that go through edge $e$. Then the maximum multi-commodity flow program can be expressed by the following path-based formulation:

\begin{align}
\text{maximize}\qquad & \nu=\sum_{i=1}^L\sum_{p\in \mathcal{P}_i}f_i(p)\label{TE-MCF-obj}\\
\text{subject to }\qquad & \sum_{i=1}^L\sum_{p\in \mathcal{P}_{i,e}} f_i(p)\leq c(e), \forall e\in E\label{TE-MCF-first-constraint}\\
& \sum_{p\in \mathcal{P}_i}f_i(p)\leq D_i\label{TE-MCF-second-constraint}\\
& f_i(p)\geq 0, \forall i\in\{1,\dots,L\}, \forall p\in \mathcal{P}_i\label{TE-MCF-last-constraint}
\end{align}
The commodity flow $i$ along path $p\in\mathcal{P}_i$ is $f_i(p)$. Constraint \eqref{TE-MCF-first-constraint} is a capacity constraint that the sum of all sub-flows on any edge cannot exceed the edge capacity. Constraint \eqref{TE-MCF-second-constraint} simply describes the maximum demand $D_i$ for commodity $i$. Finally, constraint \eqref{TE-MCF-last-constraint} simply imposes that the flow should be non-negative. For any valid flow $f$, the value of a flow $\nu(f)$ is defined as the total sum of units that all sub-flows $f_i$ send. The maximum flow is then simply defined as $\nu_{max}$.
$TE_{MF}$ is mostly used in data center backbone WANs \cite{JKMO13,HKMZ13}, where traffic is elastic and the main objective is to fully utilize the expansive WAN links.

Note that even though the single-commodity maximum flow accepts various combinatorial algorithms \cite{amo1993}, e.g., Ford-Fulkerson or Edmonds-Karp, there is to date no combinatorial algorithm for the maximum multi-commodity flow even though the problem is known to be strongly polynomial due to Tardos \cite{Tardos1986}. Furthermore, even though single-commodity networks always accept an integer maximum flow, this is not always the case with multi-commodity networks; in fact, the decision problem of integral multi-commodity flow is NP-complete even if the number of commodities is two, for both the directed and undirected cases \cite{eis1975}.


\subsection{TE Type 2: $TE_{LU}$}
\label{sec:TE-LU}

$TE_{LU}$ is mostly used in carrier networks \cite{HVSB15,FT00}, where traffic demands are given and inelastic, and the main objective thus is to control the congestion or link utilization in order to ensure the smooth operation of the network. 
The general form for this type of TE is:
\begin{align}
\text{minimize}\qquad & \theta\label{TE-LU-obj}\\
\text{subject to }\qquad & \sum_{i=1}^L\sum_{p\in \mathcal{P}_{i,e}} f_i(p)\leq \theta\cdot c(e), \forall e\in E\label{TE-LU-first-constraint}\\
& \sum_{p\in \mathcal{P}_i}f_i(p)\geq D_i\label{TE-LU-second-constraint}\\
& f_i(p)\geq 0, \forall i\in\{1,\dots,L\}, \forall p\in \mathcal{P}_i\label{TE-LU-last-constraint}
\end{align}

The variable $\theta$ in the objective function \eqref{TE-LU-obj} refers to the maximum link utilization, which must be minimized. Constraint \eqref{TE-LU-first-constraint} ensures that $\theta$ will be at least as large as the maximum link utilization; constraint \eqref{TE-LU-second-constraint} ensures that each demand is satisfied; and the last constraint \eqref{TE-LU-last-constraint} is similar to $TE_{MF}$.

\section{Segment Routing with Shortest Paths}
\label{sec:opt} 

In this section, we consider TE with shortest path based segment routing. In detail, traffic is routed only along the shortest paths for a given segment. 
In this sense, our results provide  theoretical foundations for existing work that focuses on shortest path based segment routing \cite{bhatia2015optimized,HVSB15}. 




Assume there are in total $K$ middle points available. Each end-to-end path can use up to $M\leq K$ of these middle points. For a segment $s\in S$ between an ingress node and a middlepoint, two middlepoints, or a middlepoint and an egress node, there are multiple paths in general. We assume, for simplicity, that routing is done by ECMP over all shortest paths of a segment. This is consistent with prior work \cite{bhatia2015optimized}. We use  $T_i$ to denote the complete set of logical tunnels formed by segments in $S$ that can be used for commodity $i$, with up to $M$ middle points. A tunnel involves only ingress/egress switch, and the intermediate middlepoints. This can be constructed offline efficiently. 

Let $G_{t,s}$ denote if a tunnel $t$ uses segment $s$ or not, 
and $I_{p,e}$ denote if path $p$ uses link $e$ or not. Furthermore, let $\hat{P}_s$ be the set of all shortest paths for segment $s$,
and $f_{i}(t)$ represent the flow in tunnel $t$ for commodity $i$. The split ratio  $x_{i,t}$ for $i$ on tunnel $t$ is defined as the ratio $x_{i,t}=\frac{f_i(t)}{\sum_{t \in T_i}f_{i}(t)}$.

The $TE_{LU}$ problem with segment routing can be formulated similar to Section~\ref{sec:TE-LU}, where the set of paths $\mathcal{P}_i$ for commodity $i$ is now replaced by the set of logical tunnels $T_i$: 
\begin{align}
\min      & \quad\quad \theta \label{obj} \\  
\text{s.t. }& \sum_{i=1}^{L} \sum_{t \in T_i}\sum_{s\in S_t}\sum_{p\in \hat{P}_s} f_{i}(t)\frac{I_{p,e}}{\lvert \hat{P}_s \rvert}    \leq \theta\cdot c(e) ,\forall e \in E,  \label{con:capacity} \\
&  0 \leq f_{i}(t) ,\forall  i  \in \{1,\dots,L\},  t \in T_i, \label{con:nonnegative} \\
& \sum_{t \in T_i}f_{i}(t) \geq D_i, \forall  i  \in \{1,\dots,L\}. \label{con:flow}
\end{align}
The capacity constraint \eqref{con:capacity} indicates that the total traffic routed to link $e$ from across all flows, tunnels, segments, and shortest paths, cannot exceed $\theta$ times the link capacity. Since ECMP is used for routing within any segment $s$, each shortest path $p$ of segment $s$ receives flow equal to $f_{i}(t)/\lvert \hat{P}_s \rvert$. 

Regarding the TE asymptotic complexity, we have the following result when $M$ is fixed and not part of the input:

\begin{proposition}
\label{prop:te-sp-lu}
For fixed $M$ with respect to the input graph $G$, the $TE_{LU}$ problem described by Equations \eqref{obj}-\eqref{con:flow} can be solved in (weakly) polynomial time.
\end{proposition}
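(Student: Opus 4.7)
The plan is to observe that the formulation (\ref{obj})--(\ref{con:flow}) is, in the variables $f_i(t)$ and $\theta$, already a linear program, so it suffices to exhibit that (i) its size is polynomial in the input when $M$ is fixed, and (ii) the coefficients are rational numbers with polynomially bounded encoding that can be produced in polynomial time. Weakly polynomial solvability then follows from any standard LP algorithm such as the ellipsoid or interior-point method.

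First I would bound the number of tunnels. A tunnel $t \in T_i$ is specified by an ordered sequence of at most $M$ middlepoints chosen from the $K \le |V|$ candidates (together with the fixed endpoints $s_i,t_i$), so $|T_i| \le \sum_{k=0}^{M} |V|^k = O(|V|^M)$. Since $M$ is a constant independent of the input, this is polynomial in $|V|$, and the total number of flow variables $f_i(t)$ is at most $L \cdot O(|V|^M)$, polynomial in the input. The $|E| + L + L\cdot O(|V|^M)$ constraints are likewise polynomial in number. Each tunnel itself can be enumerated in polynomial time by iterating over all ordered $k$-subsets of middlepoints for $k=0,\dots,M$.

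Second, I would check that the coefficient of $f_i(t)$ in (\ref{con:capacity}), namely $\phi_t(e) := \sum_{s \in S_t}\sum_{p \in \hat{P}_s} I_{p,e}/|\hat{P}_s|$, is computable and succinctly representable. For every ordered pair of endpoints defining a segment $s$, the shortest-path DAG can be built in polynomial time (e.g.\ Dijkstra with predecessor lists). On this DAG, the ECMP share $\sum_{p \in \hat{P}_s} I_{p,e}/|\hat{P}_s|$ for every edge $e$ is computed by two linear-time passes that count the number of shortest paths through each edge versus the total number of shortest paths. Although these counts can be as large as exponential, their binary encoding is $O(|V|\log|V|)$ bits, so the resulting rational coefficients are polynomially encoded. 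Precomputing these shares for all $O(|V|^2)$ segments and then summing over the at most $M$ segments of each tunnel yields every $\phi_t(e)$ in polynomial time.

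With both the LP size and the bit-complexity of its coefficients polynomial in the input, invoking a weakly polynomial LP solver completes the argument. The main obstacle is really conceptual rather than technical: one must notice that the exponential blow-up in the number of shortest paths per segment never needs to be materialised, because the ECMP fractions are determined by path counts on the shortest-path DAG and can be computed by dynamic programming; and one must treat $M$ as a constant so that the tunnel count $O(|V|^M)$ remains polynomial, which is exactly the hypothesis of the proposition.
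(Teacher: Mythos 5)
Your proof is correct and follows essentially the same route as the paper's: bound the number of tunnels polynomially using the fixed $M$, observe that the ECMP coefficients $\sum_{p\in \hat{P}_s} I_{p,e}/|\hat{P}_s|$ are computable in polynomial time by path counting on shortest-path DAGs (the paper cites Brandes' techniques for exactly this), and invoke a weakly polynomial LP solver. Your additional remarks on the bit-length of the path-count ratios and on enumerating tunnels as ordered sequences are welcome refinements but do not change the argument.
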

\begin{proof}
The number of commodities $L$ cannot exceed $|V|\cdot(|V|-1)$, and the number $|T_i|$ of tunnels per commodity $i$ is upper bounded by ${{K}\choose{0}}+\cdots+{{K}\choose{M}}$, where $K\leq|V|$. For fixed $M$ w.r.t. the input graph $G$, $|T_i|$ has polynomial size w.r.t the graph. Finally, the number $S_t$ of segments per tunnel cannot exceed $K+1\leq|V|+1$, since a tunnel can use at most all $K$ middlepoints. For the inner sum $\sum_{p\in \hat{P}_s} \frac{I_{p,e}}{\lvert \hat{P}_s \rvert}$, note that it basically denotes the percentage of shortest paths for segment $s$ that use link $e$. However, this can be computed in polynomial time, e.g. by using the techniques in \cite{B2001}.

Thus, we have proved that for fixed $M$, the LP has a polynomial number of variables, and a polynomial number of constraints whose coefficients can be computed in polynomial time. The proposition then immediately follows by standard results in linear programming \cite{K1984,K1980}.
\end{proof}

Given that the $TE_{MF}$ formulation is very similar, we can similarly prove that is can also be solved in (weakly) polynomial time for segment routing with shortest paths.

Finally, we observe that the TE problem is naturally related to the shortest path centrality that we discuss in Section \ref{sec:2pass}. Indeed, the inner part $\sum_{p\in \hat{P}_s} \frac{I_{p,e}}{\lvert \hat{P}_s \rvert}$ of constraint \eqref{con:capacity} precisely describes the percentage of shortest paths for segment $s$ that use a specific edge, and note that we do that for all possible segments. Even though shortest path centrality refers to a node rather than an edge and equally takes into account all possible source-destination pairs, constraint \eqref{con:capacity} reveals interesting connections between the popular centrality metric and the segment routing problem.
Furthermore, this hints at the potential power of shortest path centrality and it variants to the middlepoint selection problem, which we confirm experimentally in Section \ref{sec:evaluation}.

\section{Centrality Based Middlepoint Selection}
\label{sec:2pass}

Proposition \ref{prop:te-sp-lu} suggests that if we only allow a fixed number $M$ of middlepoints per path, then TE with shortest paths is (weakly) polynomially computable.
One approach would then be to consider all nodes as candidate middlepoints, i.e. $K=|V|$. However, this results in very large TE programs that are costly to solve. An alternative is to just consider a small number of middlepoints such that $K\ll|V|$, that would still produce good output for the TE. Given that this is generally NP-hard \cite{HVSB15}, in this section we discuss practical middlepoint selection based on alternative centrality measures with polynomial complexity. Note that these centralities are \textit{structural} metrics that look at the graph structure, i.e., the connections among the various nodes. However, they generally do not take into account the flow network and its flow conservation and capacity constraints.

{\bf Shortest-path centrality.}
We start with \textit{shortest-path} centrality, which characterizes the power of a node in terms of the number of shortest paths that go through that node for a randomly picked source-destination pair. Concretely, assume a directed graph $G=(V,E)$. The shortest-path betweenness centrality of a node $v\in V$ \cite{F1977} is defined as:
\begin{equation}
\delta(v) = \sum_{s,t\in V|s\neq v\neq t}\frac{\sigma_{st}(v)}{\sigma_{st}},
\end{equation}
where $\sigma_{st}(v)$ is the number of shortest paths from $s$ to $t$ that go through $v$, and $\sigma_{st}$ the total number of shortest paths from $s$ to $t$. Calculating the shortest path centrality of all vertices in a graph requires $\Theta({|V|}^3)$ time and $\Theta({|V|}^2)$ space. This can be achieved by augmenting the Floyd-Warshall algorithm for the all-pairs shortest-paths problem with path counting. Brande's algorithm improves these bounds by only using $O(|V|+|E|)$ space and running in $O(|V|+|E|)$ and $O(|V|\cdot|E|+{|V|}^2\log|V|)$ time on unweighted and weighted networks, respectively \cite{B2001}.

{\bf Group shortest-path centrality.}
As opposed to the aforementioned \textit{individual} centrality, the \textit{group} shortest-path betweenness centrality of a group of nodes $C\subseteq V$ refers to the combined centrality of the group \cite{eb1999}. It is defined as:
\begin{equation}
\delta_{\mathcal{G}}(C)(v) = \sum_{s,t\in V|s\neq v\neq t}\frac{\sigma_{st}(C)}{\sigma_{st}},
\end{equation}
where $\sigma_{st}(C)$ the number of shortest paths that go through \textit{any node} in $C$.
Group betweenness centrality can be approximated within a factor $1-\frac{1}{e}$ to the optimal using a greedy incremental algorithm \cite{depz2009}. Brandes' algorithm for computing the betweenness centrality of all vertices can be modified to compute the group betweenness centrality of one group of nodes with the same asymptotic running time \cite{pyeb2009}.

{\bf Degree centrality.}
A simple alternative to the family of shortest path centralities is \textit{degree} centrality. The degree centrality of a node $v\in V$ is defined as the average of its in-degree and its out-degree:
\begin{equation}
d(v) = \frac{|v^+|+|v^-|}{2}
\end{equation}
Degree centrality captures a node's power by its number of neighbors; the higher that number, the better connected the node and the larger its centrality. Despite its simplicity, degree centrality can capture to a good extent a node's structural importance.

{\bf Weighted centralities.}
All aforementioned centralities only employ the graph connectivity information, and treat all links equally. However, in practice links are further characterized by their capacity. We can thus define variants of the previous centralities that additionally take into account the link capacity information. A simple approach is to associate each edge with the non-negative cost $\frac{1}{c(e)}$. This is based on the observation that the higher the capacity, the lower the cost of the link since it can accommodate larger flows. The shortest path centrality variants are simple to define, if we note that the cost of a path is the sum of the costs of its constituent links, and the shortest path refers to the path with the minimum cost among all paths. In a similar spirit, we can define the weighted degree of any node as the sum of the costs of the edges that are incident to the node. Intuitively, we expect that the weighted variants should perform better since they take into account both the connectivity and the capacity information. Section~\ref{sec:eva_weighted} empirically confirms our intuition.


\section{Evaluation}
\label{sec:evaluation}

In this section, we conduct trace-driven simulations to evaluate the performance of centrality based middlepoint selection methods.
The experiments are designed to answer the following important questions:
\begin{itemize}[noitemsep]
\item What is the best parameter setting for centrality based middlepoint selection? Specially, how many middlepoints per commodity, and how many middlepoints in total should we use?
\item How does our centrality based approach compare to existing work, in terms of both performance and complexity?
\item How do various centrality definitions perform against each other?
\end{itemize}

\subsection{Methodology}
\label{sec:methodology}

We use two network topologies from the dataset provided by DEFO (Declarative and Expressive Forwarding Optimizer) \cite{defo} which is used in \cite{HVSB15}. 
One is a synthetic network with 100 nodes ({\tt synth100} in the dataset), and the other is a real network with 161 nodes ({\tt rf3257} in the dataset). Table~\ref{table:dataset} provides more details about the networks.
The DEFO dataset also contains information of commodity flows (simply referred to as flows hereafter) for these topologies. For the real 161-node topology the flows are provided by the ISP \cite{HVSB15}. For the synthetic topology the demand matrices are computed using the approach in \cite{M05}. As explained in \cite{HVSB15}, this approach uses a gravity model fed with i.i.d. exponential random variables. It produces realistic demand matrices as shown in \cite{M05,HVSB15}.
\begin{table}[!h]
\centering
\resizebox{0.8\columnwidth}{!}{%
\begin{tabular}{|l|l|l|l|l|}
\hline
Type&ID&\# nodes&\# links&\# flows\\ \hline
Synthetic&{\tt synth100} &100&572&9817 \\ \hline
Real&{\tt rf3257} &161&656&25486 \\ \hline
\end{tabular}
}
\vspace{3mm}
\caption{Dataset Summary.}
\label{table:dataset}
\end{table}

We perform simulations on servers each with a 2.2~GHz 64-bit 8-Core Xeon processor and 128~GB memory. We use the {\tt cvxpy} \cite{cvxpy} modeling language with the {ECOS solver} \cite{D13} to solve the LPs.
Our evaluation compares the following schemes:
\begin{itemize}[noitemsep]
\item {\em Baseline}: Traditional approach of applying segment routing studied in Sec.~IV of \cite{bhatia2015optimized}. Specifically, the TE problem assumes that every node is a candidate of middlepoint, and exactly one middlepoint is used for each flow. Only shortest paths are used for segment routing.
\item {\em Random}: Our approach where a total of $K$  middlepoints are randomly selected and used in TE problems \temf~and \telu.
\item {\em Shortest-path centrality (SP)}: Our approach where middlepoints are selected using shortest-path centrality as explained in \cref{sec:2pass} for TE.
\item {\em Group shortest-path centrality (GSP)}: Our approach where middlepoints are selected using group shortest-path centrality.
\item {\em Degree centrality (Degree)}: Our approach where middlepoints are selected using degree centrality.
\end{itemize}

\subsection{Microscopic Performance }
\label{sec:microscopic}

First we aim to understand the microscopic performance of our centrality based approach.
There are two key parameters affecting our approach in general: the number of middlepoints per flow $M$, and the total number of available middlepoints for all flows $K$. Their effects need to be thoroughly understood before we compare our approach to existing methods.

\noindent{\bf Number of middlepoints per flow $M$}:
We begin by trying to answer how many middlepoints should be used for each flow. Note that there is an inherent tradeoff: more middlepoints per flow leads to more flexibility in constructing the paths and balancing the traffic, and thus better performance. On the other hand it also means more overhead in terms of higher complexity of the TE algorithms.

To demonstrate this tradeoff, we use \telu~and compute the maximum link utilization with our centrality based middlepoint selection when the number of middlepoints per flow $M$ is equal to 1 or 2. We use the {\tt synth100} network with 100 nodes. We choose 1000 flows for the topology randomly for ten times and report the average. We apply {\em GSP} to select the middlepoints, and vary the total number of available middlepoints $K$ from 2 to 6. For a given $K$, the flows are identical for different values of $M$.
Note we also experiment with \temf~and the 161-node network; the results are qualitatively similar and omitted here for space.

\begin{figure}[thb]
\centering
\begin{minipage}{0.48\linewidth}
\includegraphics[width=\linewidth]{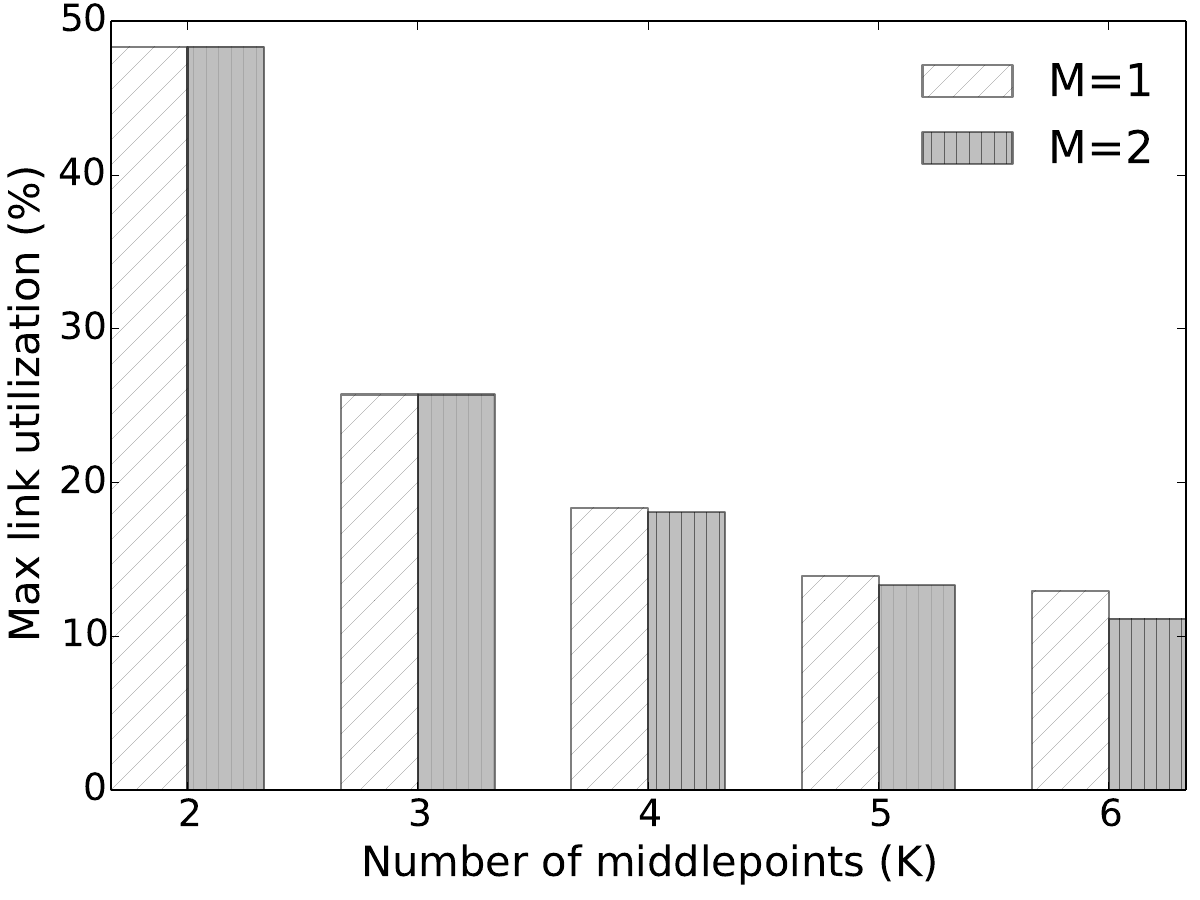}
      \caption{Maximum link utilization of the 100-node network with 1000 flows and varying $M$.}
      \label{fig:ul_path_k100}
      \end{minipage}
\begin{minipage}{0.48\linewidth}
   \includegraphics[width=1\linewidth]{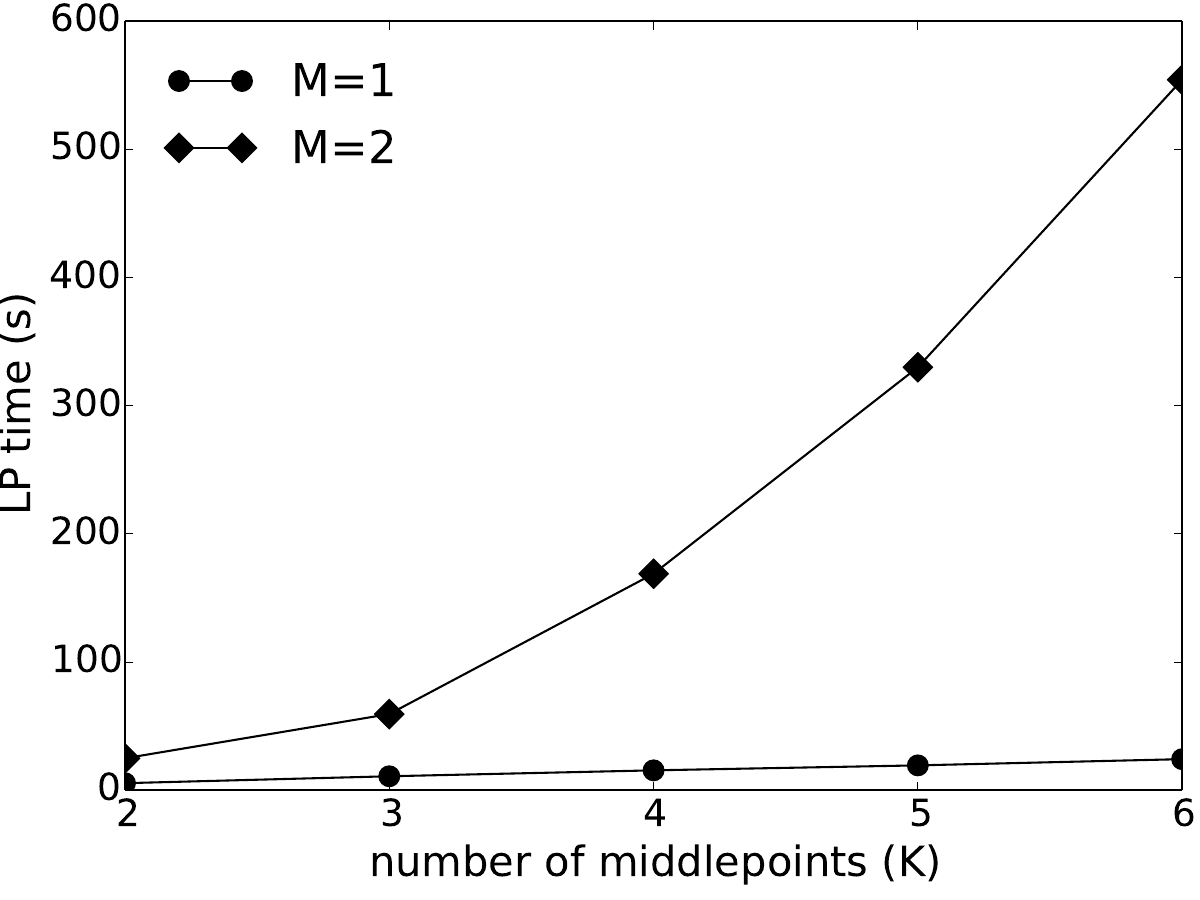}
      \caption{LP solving time of \telu~on the 100-node network with 1000 flows and varying $M$.}
      \label{fig:time_path_k100}
      \end{minipage}
\end{figure}

Fig.~\ref{fig:ul_path_k100} and Fig.~\ref{fig:time_path_k100} depict the results. We find that interestingly, the maximum link utilizations for $M=1$ and $M=2$ are quite similar. Yet the time of solving the TE with $M=2$ is much higher than $M=1$ as shown in Fig.~\ref{fig:time_path_k100} (a difference of almost 20x) when $K=6$). Given that the middlepoints are central, there is indeed a low probability that a bottleneck link exists between two middlepoints. Hence, maximum link utilization is largely determined by the bottleneck links between either the source and the middlepoint, or between the middlepoint and the destination. Therefore we conclude that 1 middlepoint per flow is good enough for performance, and use that throughout the remainder of the experiments.

\noindent{\bf Number of total middlepoints $K$}:
We next run experiments to verify that just a few central middlepoints are sufficient to achieve satisfactory TE performance.
We vary the total number of middlepoints $K$ from 1 to 6 for \telu~and from 1 to 8 for \temf. The middlepoints are selected using {\em Random}, {\em SP}, {\em GSP}, and {\em Degree} as explained in \cref{sec:methodology}.
We use both the 100-node and 161-node networks, and randomly choose 1000 flows and 2000 flows respectively for 10 runs. We report the average and standard deviation results.
Since {\em Random} is non-deterministic, we randomly select 5 sets of middlepoints for each of the 10 flow sets, resulting in 50 runs in total for {\rm Random}.
For \temf, in order to make the results more readable, we scale the traffic volumes by 10 times for 100-node topology and 40 times for 161-node topology, respectively.

We depict the results in Fig.~\ref{fig:ul_middle_points100}--Fig.~\ref{fig:161_multi40_2000_middle}.
As expected, more available middlepoints improve TE performance.
For \telu, when there is only one available middlepoint for the network, every flow has to be routed through it, which severely limits the path choice and the maximum link utilization is way above 1 for {\em Random} and over 1 for the other schemes. With two middlepoints the maximum link utilization is dramatically reduced by over 50\% for most schemes as seen in Fig.~\ref{fig:ul_middle_points100} and Fig~\ref{fig:ul_middle_points161}.
The same can be observed for \temf~in Fig.~\ref{fig:100_multi10_1000_middle} and Fig.~\ref{fig:161_multi40_2000_middle}.
The demand satisfaction ratio is improved by around a factor of 2 when $K$ increases to 2.

\begin{figure}[h]
\centering
\begin{minipage}{0.49\linewidth}
   \includegraphics[width=\linewidth]{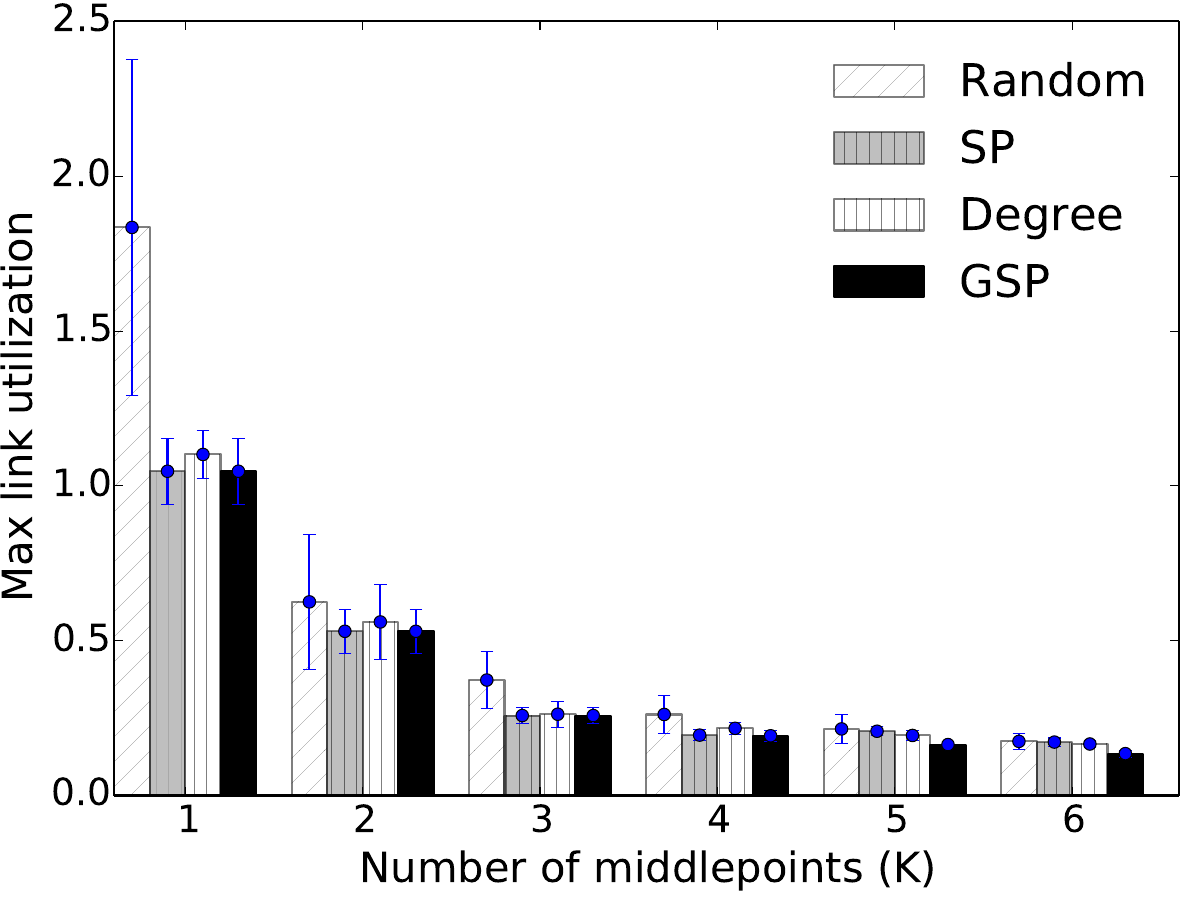}
      \caption{100-node network with 1000 flows of \telu.}
      \label{fig:ul_middle_points100}
\end{minipage}
\begin{minipage}{0.49\linewidth}
\includegraphics[width=\linewidth]{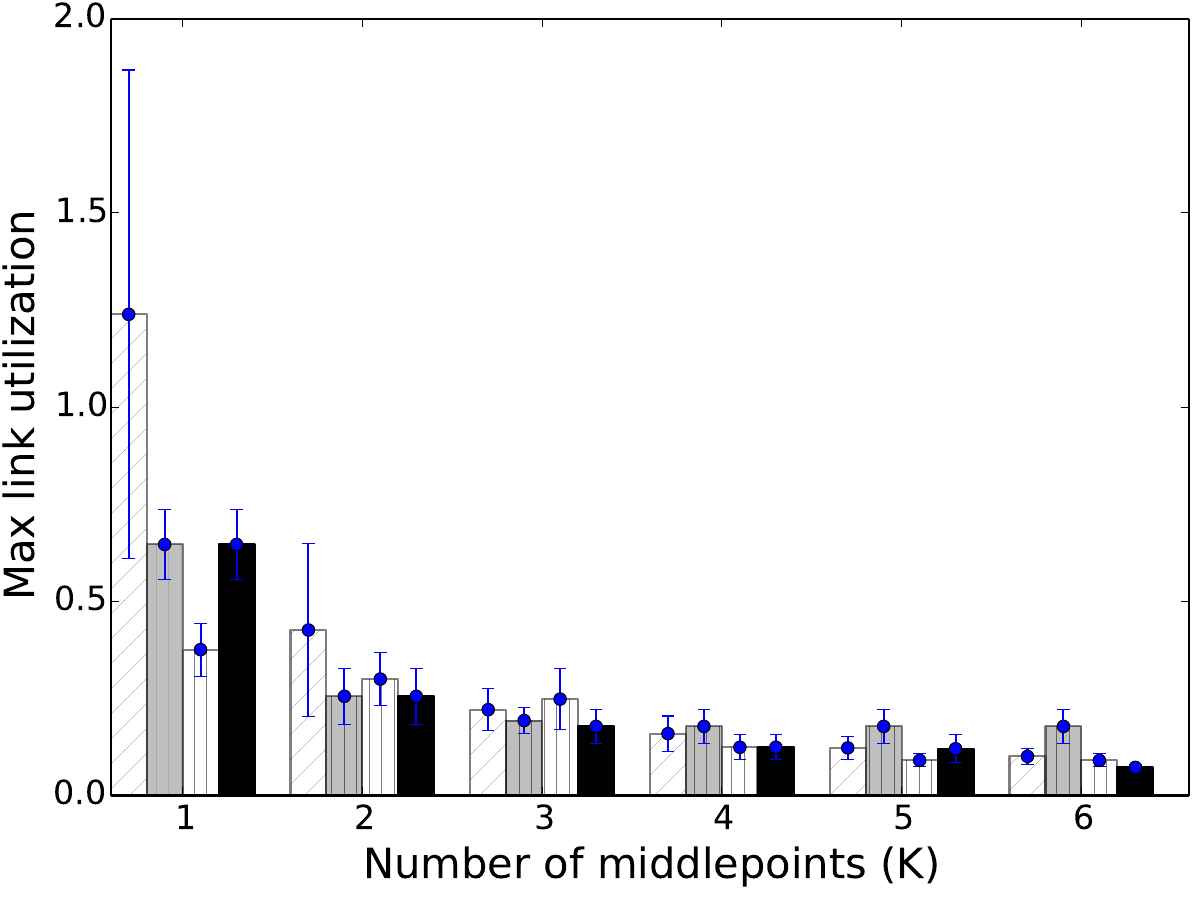}
      \caption{161-node network with 2000 flows of \telu.}
      \label{fig:ul_middle_points161}
\end{minipage}
\end{figure}

\begin{figure}[h]
\centering
\begin{minipage}{0.49\linewidth}
   \includegraphics[width=\linewidth]{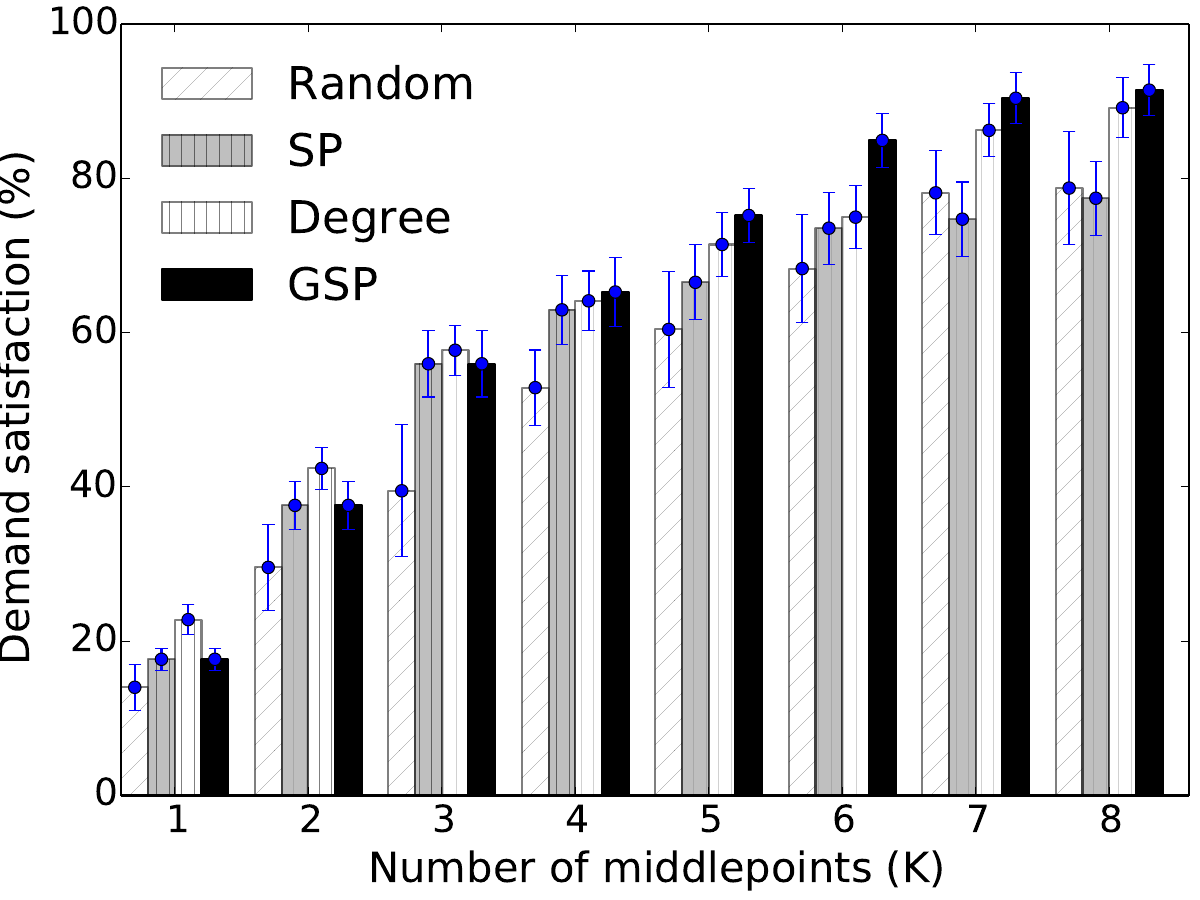}
      \caption{100-node network with 1000 flows of \temf.} \label{fig:100_multi10_1000_middle}
\end{minipage}
\begin{minipage}{0.49\linewidth}
\includegraphics[width=\linewidth]{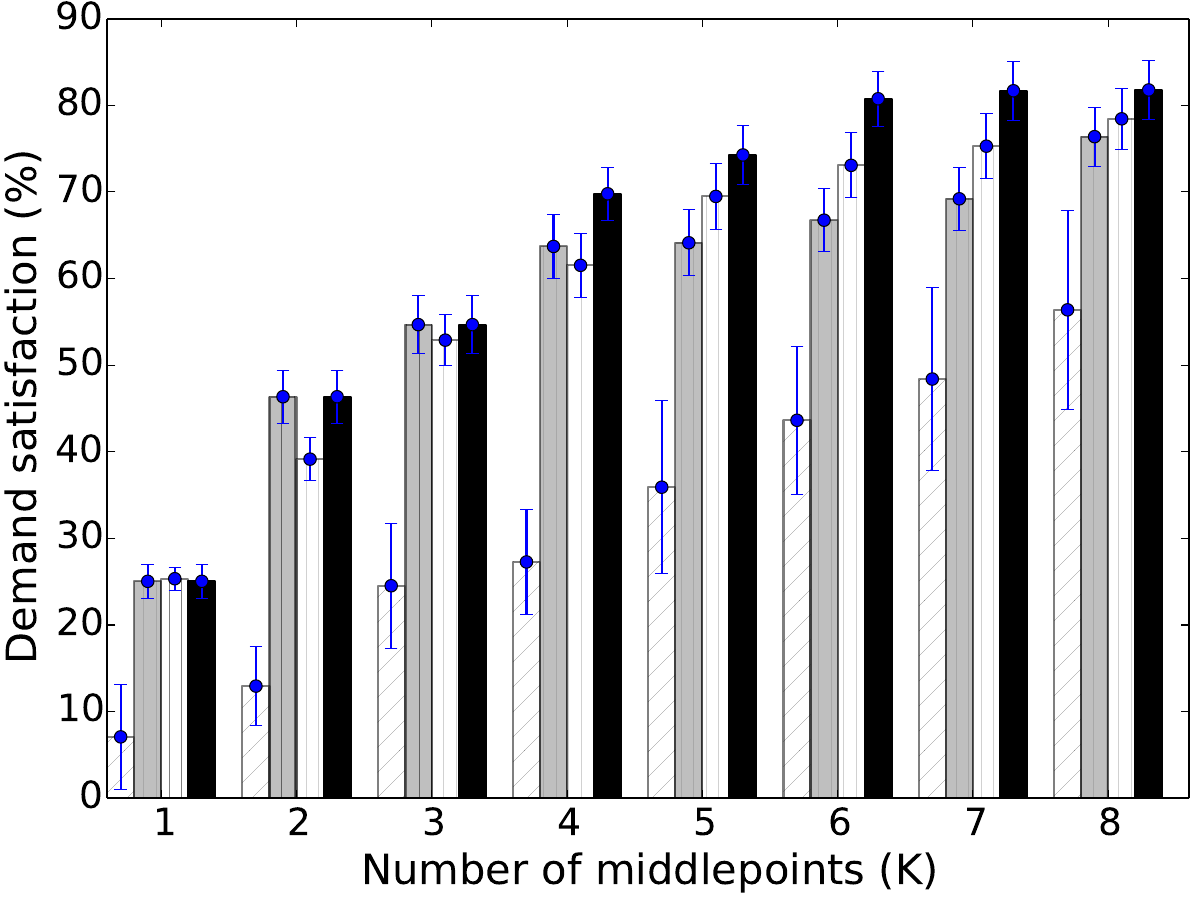}
      \caption{161-node network with 2000 flows of \temf.} \label{fig:161_multi40_2000_middle}
\end{minipage}
\vspace{-3mm}
\end{figure}

Another important observation is that the TE performance exhibits diminishing marginal gains as $K$ increases.
For \telu, when more than 4 middlepoints are used, very limited gains are observed ($<$10\%) in Fig.~\ref{fig:ul_middle_points100} and Fig.~\ref{fig:ul_middle_points161}.
For \temf, beyond 7 middlepoints there is little demand satisfaction improvement especially for {\em GSP} in Fig.~\ref{fig:100_multi10_1000_middle} and Fig.~\ref{fig:161_multi40_2000_middle}.
On the other hand the runtime of the TE algorithms increases dramatically due to the growing size of the LP problems. Take the 161-node network for instance. The LP time for \telu~increases by $\sim$50\% when $K$ increases from 4 to 6 as shown in Table~\ref{table:time_lu}, and from 6 to 8 for \temf~as in Table~\ref{table:time_mf}.

\begin{table}[!h]
\centering
\resizebox{0.8\columnwidth}{!}{%
\begin{tabular}{|l|c|c|c|c|c|c|}
\hline
\backslashbox{Scheme}{$K$}&1&2&3&4&5&6 \\ \hline
Random&11.25&24.21&36.50&51.29&66.00&81.48 \\ \hline
SP&9.79&19.56&29.35&38.31&46.27&54.25 \\ \hline
Degree&8.92&17.83&28.49&42.98&55.45&64.96 \\ \hline
GSP&9.75&19.54&29.30&39.64&54.85&70.81 \\ \hline
\end{tabular}
}
\vspace{3mm}
\caption{Average LP time (seconds) of 161-node network with 2000 flows of \telu.}
\vspace{-3mm}
\label{table:time_lu}
\end{table}

\begin{table}[!h]
\centering
\resizebox{0.98\columnwidth}{!}{%
\begin{tabular}{|l|c|c|c|c|c|c|c|c|}
\hline
\backslashbox{Scheme}{$K$}&1&2&3&4&5&6&7&8 \\ \hline
Random&12.10&23.87&32.78&45.91&60.04&76.43&100.02&110.00 \\ \hline
SP&10.63&19.49&26.46&35.66&43.36&53.80&72.30&82.26 \\ \hline
Degree&9.92&17.68&26.25&41.21&52.52&60.47&78.97&86.62 \\ \hline
GSP&10.42&20.00&26.83&38.35&52.79&67.49&81.44&91.88 \\ \hline
\end{tabular}
}
\vspace{3mm}
\caption{Average LP time (seconds) of 161-node network with 2000 flows of \temf.}
\label{table:time_mf}
\vspace{-3mm}
\end{table}

Based on the above results, we conclude that 4 middlepoints for \telu   ~and 7 middlepoints for \temf~are the sweetspots of the tradeoff between performance and complexity. We thus use these settings in the rest of the experiments.
This confirms the intuition behind our centrality based approach, namely, that it suffices to just use a small fraction of nodes as middlepoints (2.48\%--7\% of nodes) to achieve satisfactory performance.

\subsection{Comparison with Baseline}
\label{sec:vsTE}

Our motivation of using centrality based middlepoint selection is to reduce the high complexity of existing approaches, which takes  all nodes in the network as middlepoints \cite{bhatia2015optimized} as discussed in \cref{sec:intro}.
We now compare our approach against {\em Baseline} to validate its effectiveness in this regard. The experiments here are performed on the 100-node topology for both TE formulations.
The maximum link utilization and LP time of \telu~are shown in Fig.~\ref{fig:baseline_LU_util} and Fig.~\ref{fig:baseline_LU_LP}, respectively. The demand satisfaction ratio and corresponding LP time are depicted in Fig.~\ref{fig:baseline_MF_per} and Fig.~\ref{fig:baseline_MF_LP}, respectively, for \temf.
We scale the demands of flows by a factor of 2 for \telu~and a factor of 40 for \temf.


\begin{figure}[h]
\centering
\begin{minipage}{0.49\linewidth}
   \includegraphics[width=\linewidth]{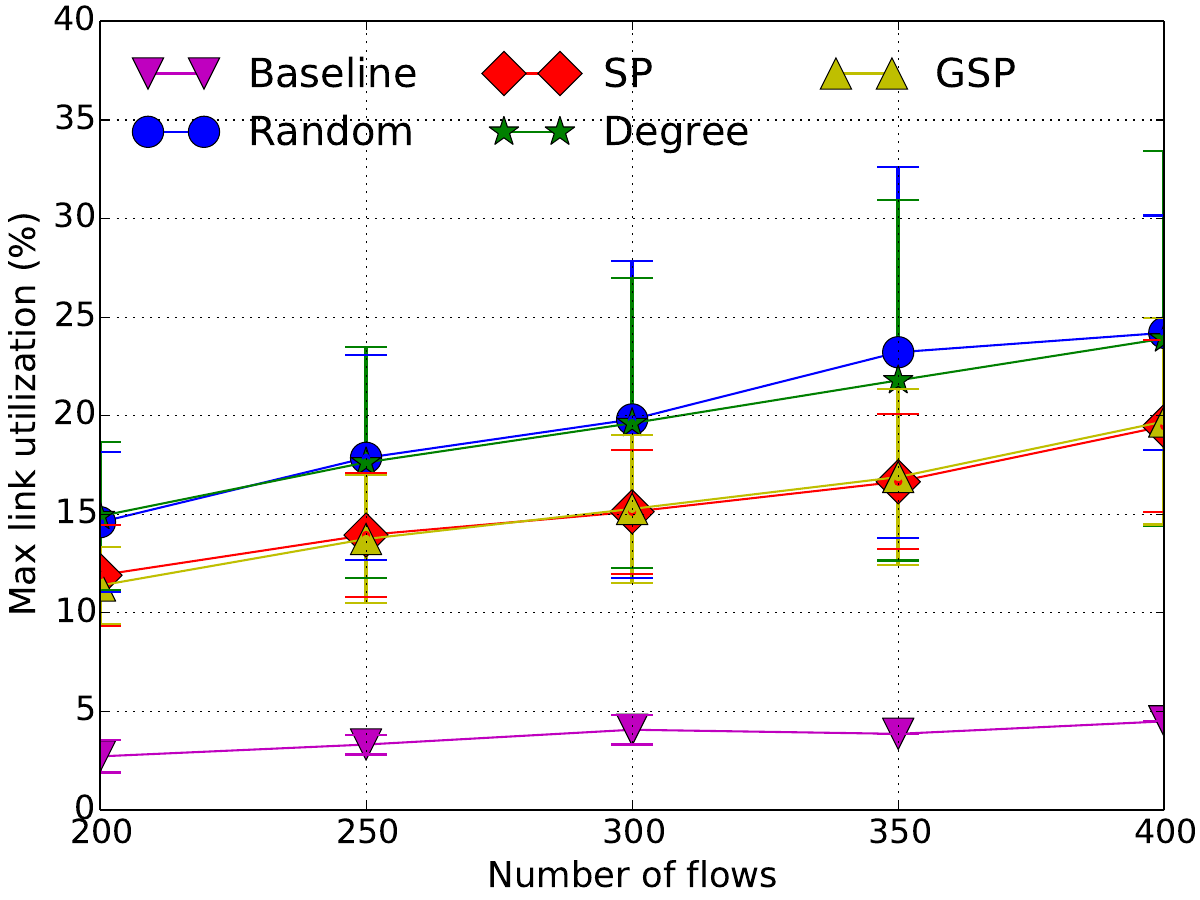}
      \caption{Performance of \telu~with different centralities and {\em Baseline}.} \label{fig:baseline_LU_util}
\end{minipage}
\begin{minipage}{0.49\linewidth}
\includegraphics[width=\linewidth]{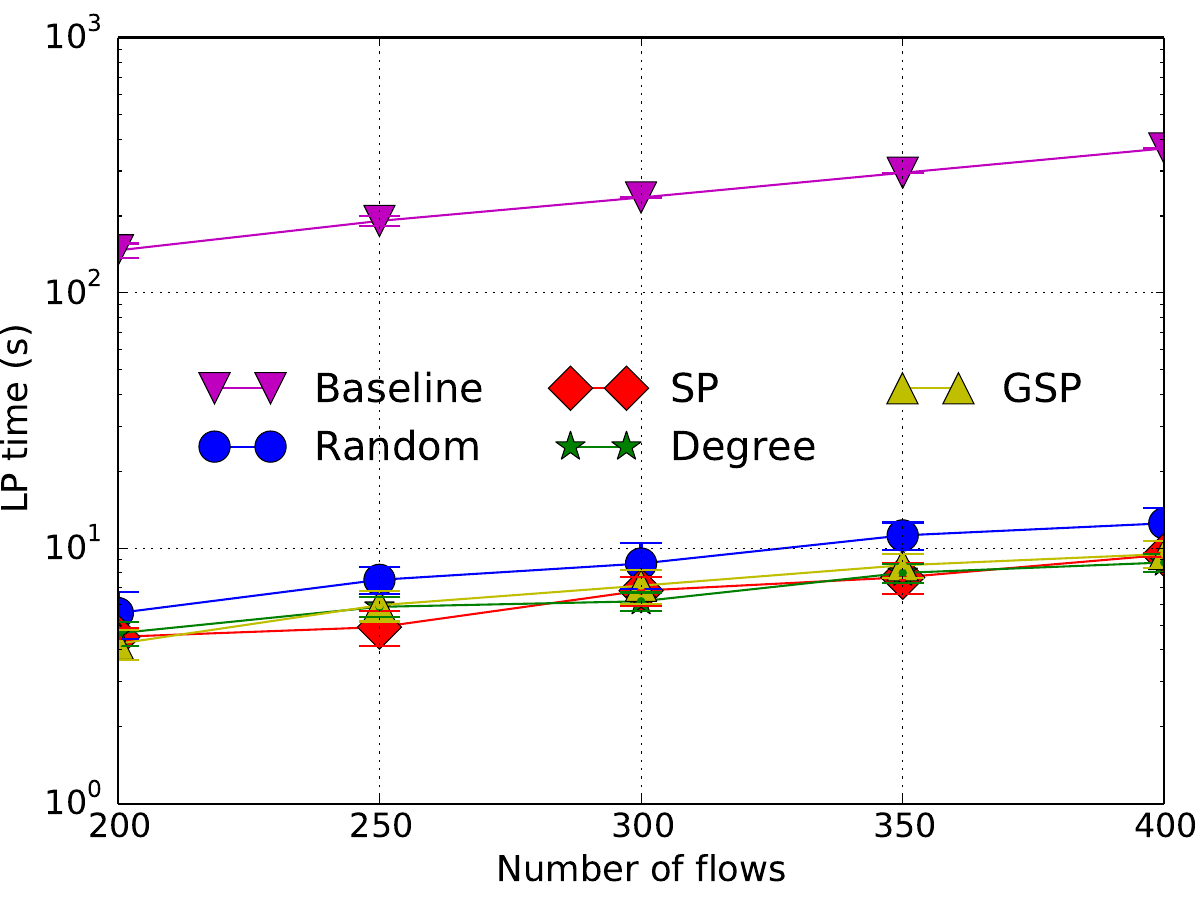}
      \caption{LP time of \telu~with different centralities and {\em Baseline}. Note the log scale of the y-axis.} \label{fig:baseline_LU_LP}
\end{minipage}
\vspace{-3mm}
\end{figure}

\begin{figure}[h]
\centering
\begin{minipage}{0.49\linewidth}
   \includegraphics[width=\linewidth]{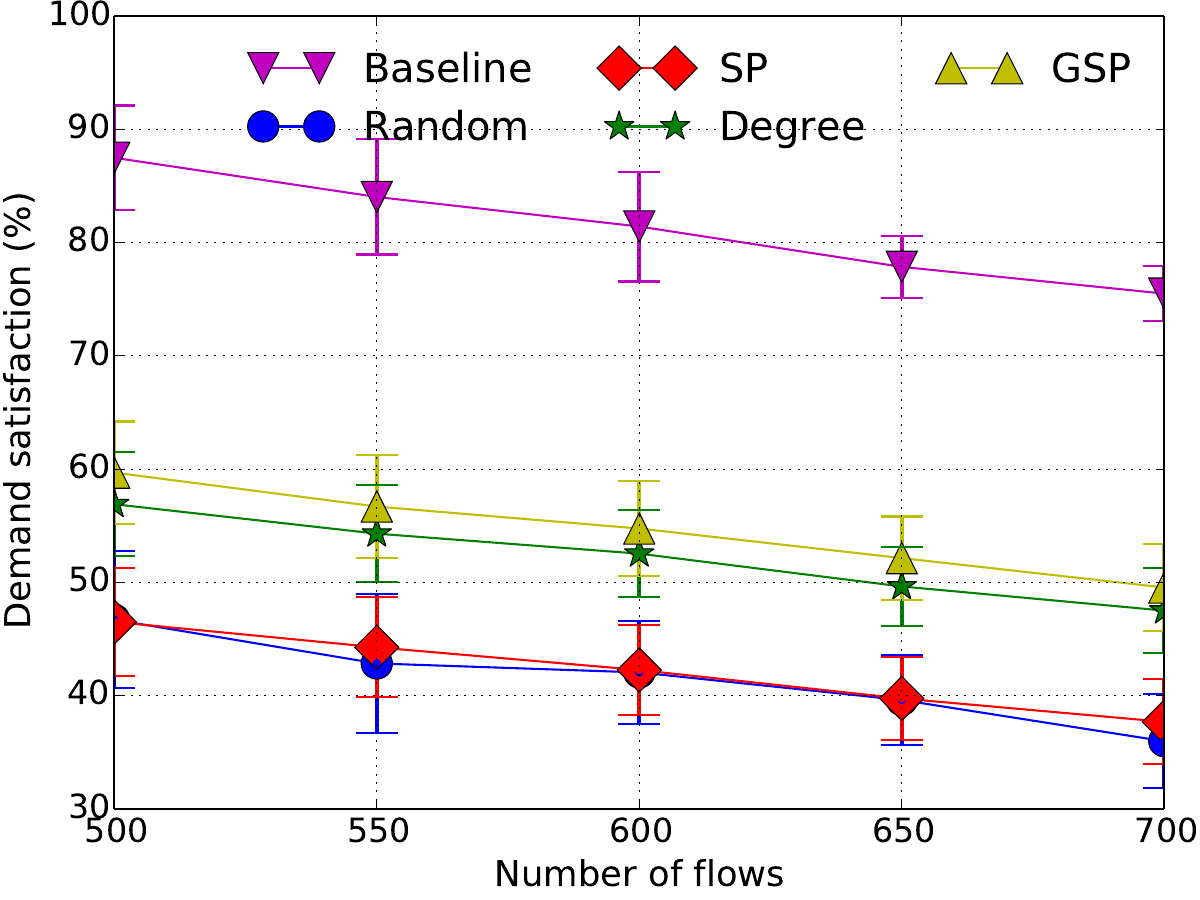}
      \caption{Performance of \temf~between different centralities and {\em Baseline}.} \label{fig:baseline_MF_per}
\end{minipage}
\begin{minipage}{0.49\linewidth}
\includegraphics[width=\linewidth]{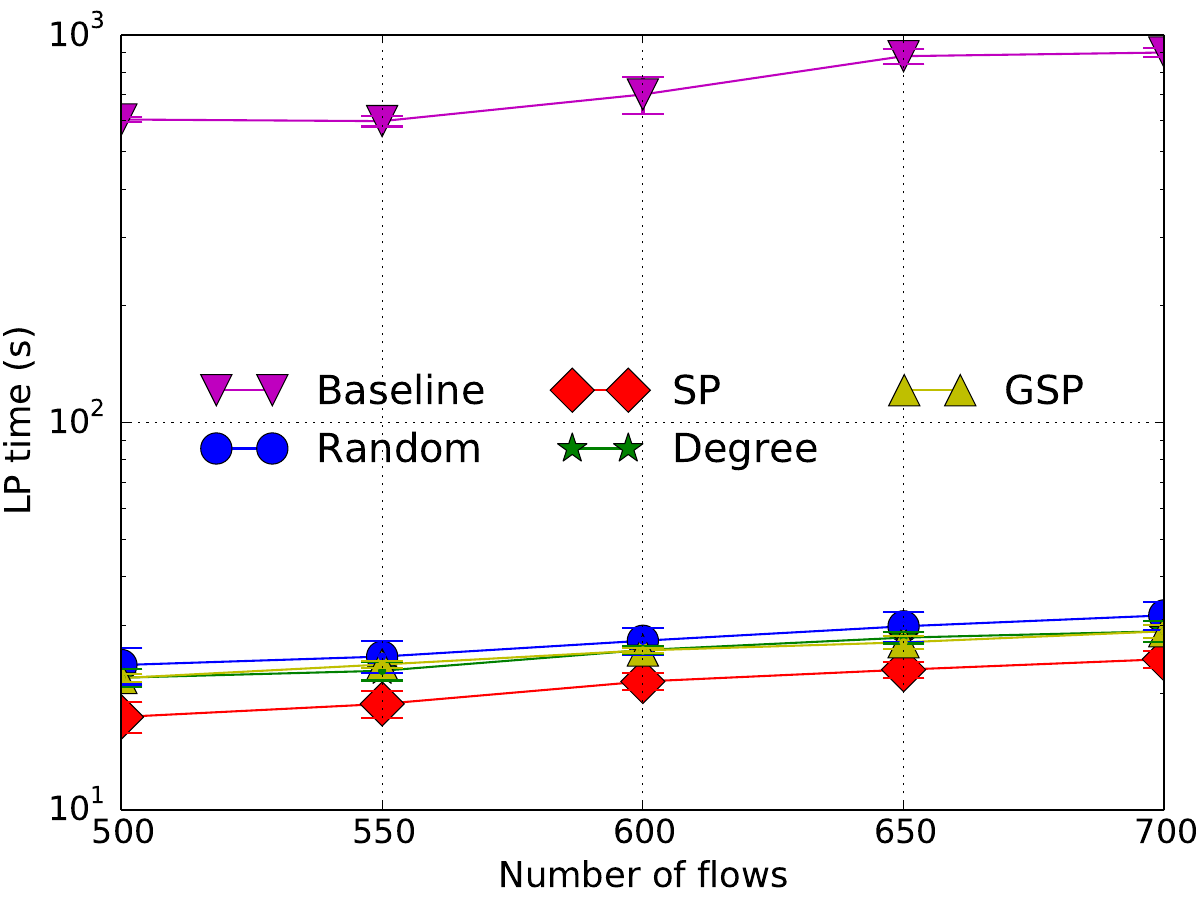}
      \caption{LP time of \temf~between different centralities and {\em Baseline}. Note the log scale of the y-axis.} \label{fig:baseline_MF_LP}
\end{minipage}
\end{figure}

Notice that with {\em Baseline}, the TE problems have much more variables and constraints due to the large number of middlepoints. As a result, our machines can only solve \telu~with $\sim$400 flows, and \temf~with $\sim$1500 flows.
{For problems beyond these scales the solver reports error messages.}
Recall that with centrality based middlepoint selection the solver can easily handle problems with 3000 flows even for the bigger 161-node topology as we will show in \cref{sec:whichcentrality}. In addition, solving \temf~with {\em Baseline} and 1500 flows takes more than three hours, far exceeding the time scale (5--10 min) at which TE is performed in practice \cite{JKMO13,HKMZ13,HVSB15}. Thus we only run {\em Baseline} with up to 700 flows for \temf~to make sure the LP time is less than 1000 seconds.

As shown in Fig.~\ref{fig:baseline_LU_util}, the maximum link utilization of our approach is about 4--5 times that of {\em Baseline}, whereas the LP time of {\em Baseline} is at least 40 times worse than any centrality based approach shown in Fig.~\ref{fig:baseline_LU_LP}.
For \temf, the demand satisfaction ratio of {\em Baseline} is about 1.5 times of ours in Fig.~\ref{fig:baseline_MF_per} but the LP time is about 60 times higher than ours as in Fig.~\ref{fig:baseline_MF_LP}.

Indeed we observe that our centrality based approach sacrifices performance in order to reduce the complexity of TE.
We argue that this is a sensible tradeoff to make in most cases, especially for data center backbone WANs that use \temf~with very short time periods of 5--10 min \cite{JKMO13,HKMZ13,HVSB15}. Centrality based approach can support much larger topologies and much more flows with orders of magnitude smaller runtime. One can also increase $K$ to obtain better performance if necessary.

\subsection{Comparison of Various Centralities}
\label{sec:whichcentrality}

We now wish to understand the relative performance of various centralities in realistic settings.
We use both the 100-node and 161-node topologies with $M=1$. Total number of middlepoints $K$ is set to 4 for \telu~and 7 for \temf~based on our previous experiments.
We vary the number of flows and for a given number of flows randomly draw flows 15 times from the traces.
For {\em Random} we perform 5 independent random selections of middlepoints for a given set of flows, resulting in 75 runs in total. For each run we compute the respective performance metrics and report the average and standard deviation.
In order to make the results more readable we scale the demands by 10 for 100-node topology and 40 for 161-node topology, respectively for \temf.


\begin{figure}[htpb]
\centering
\begin{minipage}{0.49\linewidth}
   \includegraphics[width=1\linewidth]{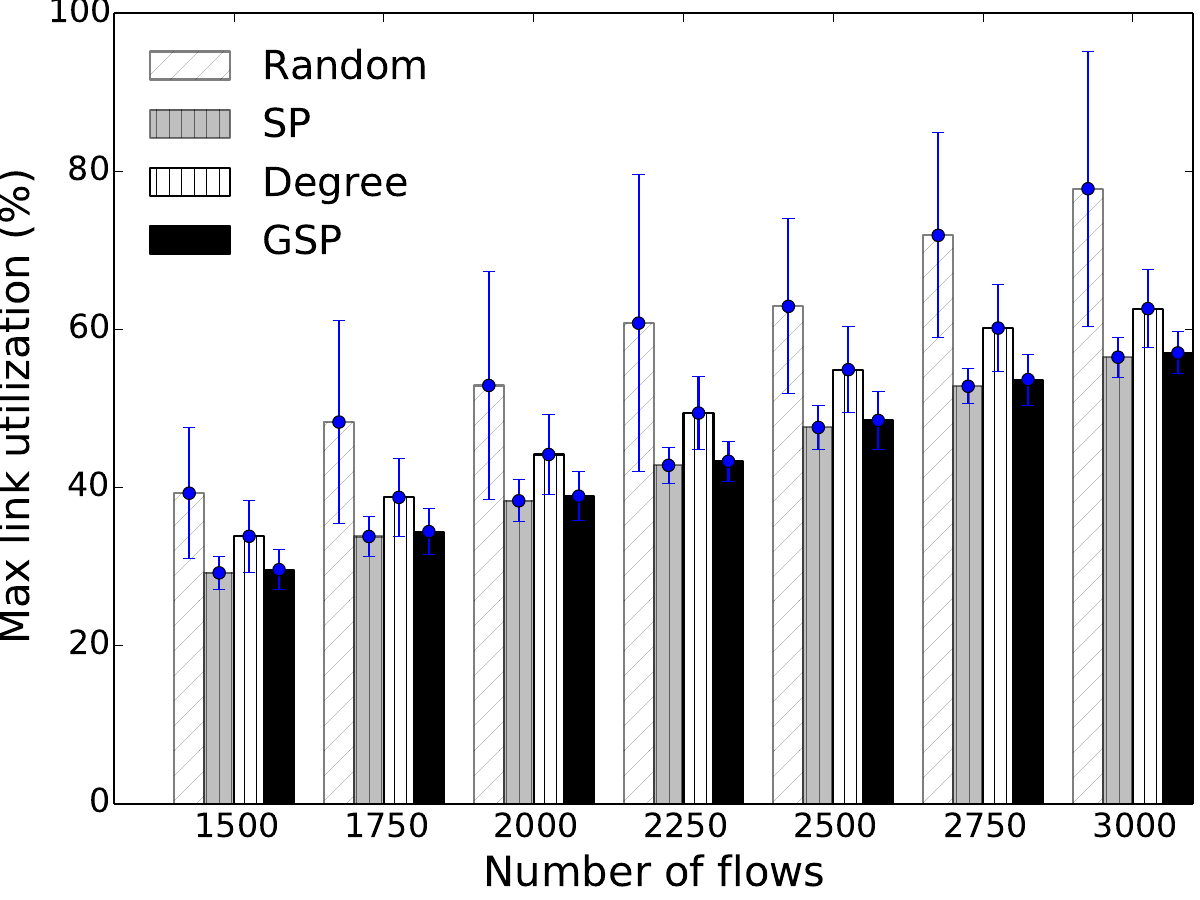}
      \caption{Performance of \telu~on the 100-node network with various centralities. $M$=1 and $K$=4.}
      \label{fig:ul_demands100}
\end{minipage}
\begin{minipage}{0.49\linewidth}
\includegraphics[width=1\linewidth]{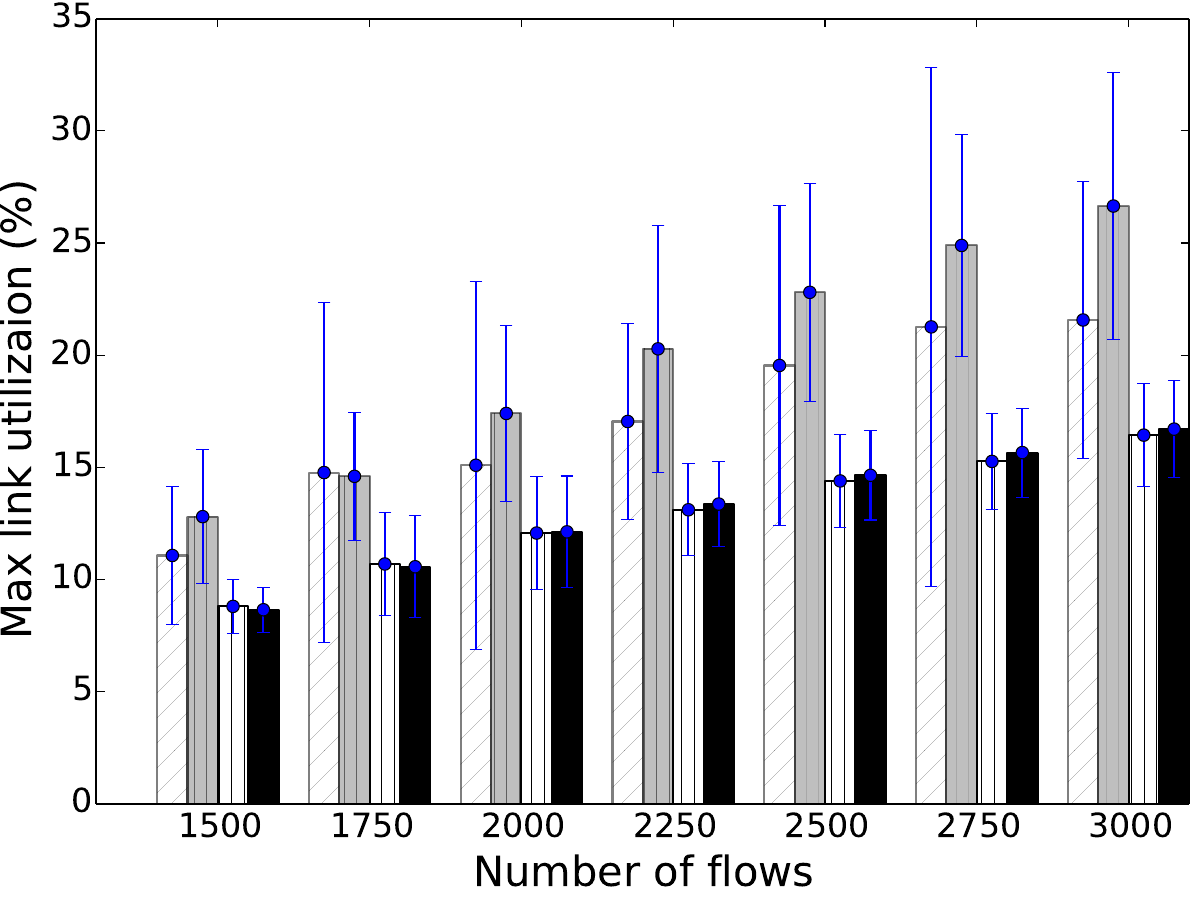}
      \caption{Performance of \telu~on the 161-node network with various centralities. $M$=1 and $K$=4.}
      \label{fig:ul_demands161}
\end{minipage}
\end{figure}

\begin{figure}[htpb]
\centering
\begin{minipage}{0.49\linewidth}
   \includegraphics[width=1\linewidth]{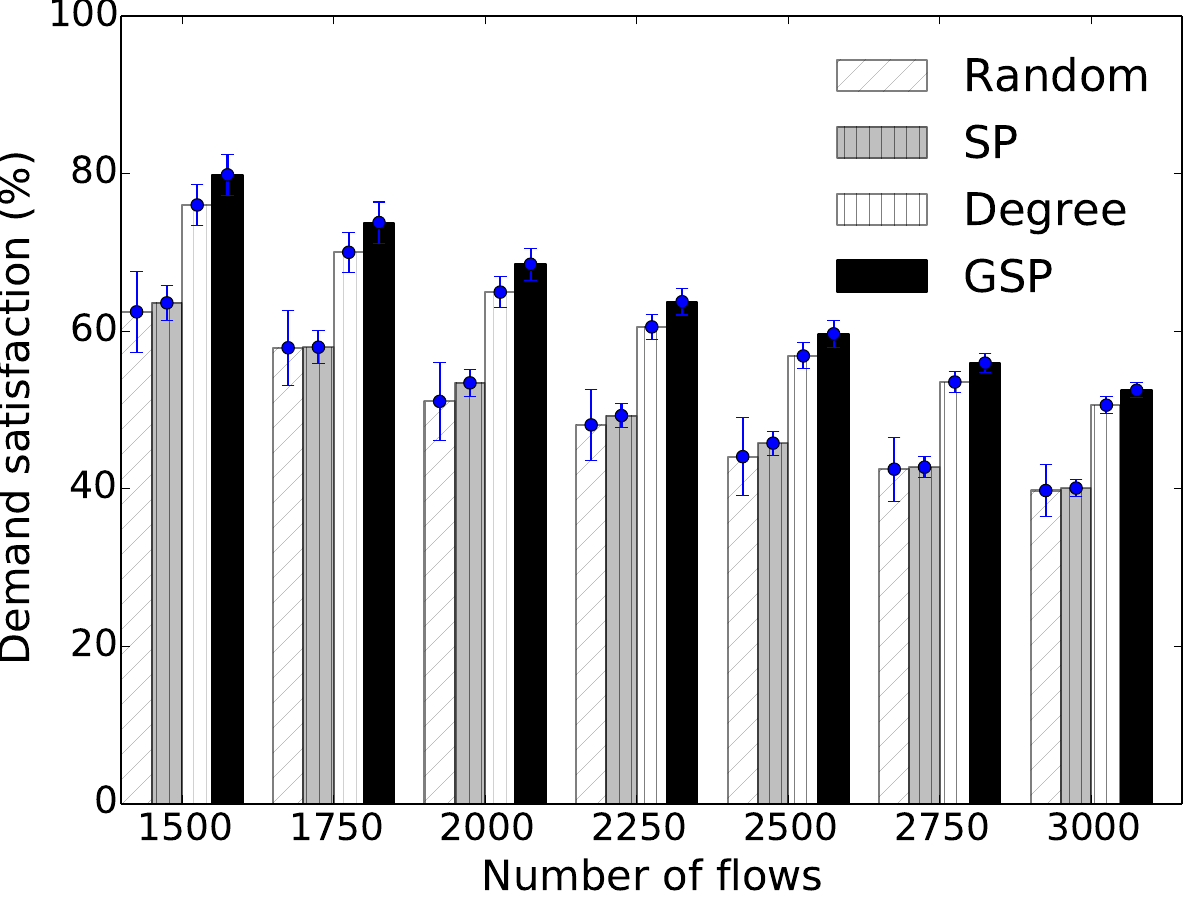}
      \caption{Performance of \temf~on the 100-node network with various centralities. $M$=1 and $K$=7.}
      \label{fig:mf_100}
\end{minipage}
\begin{minipage}{0.49\linewidth}
\includegraphics[width=1\linewidth]{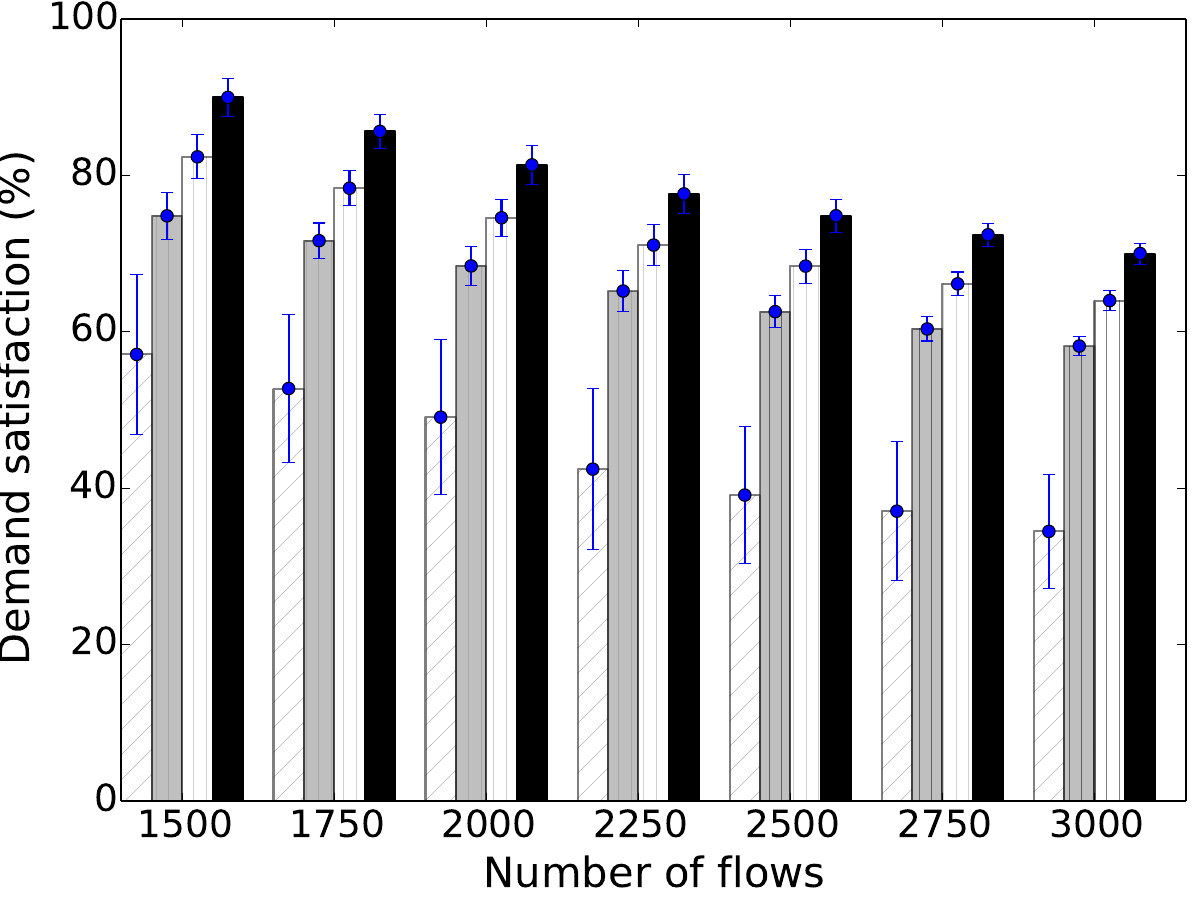}
      \caption{Performance of \temf~on the 161-node network with various centralities. $M$=1 and $K$=7. }
      \label{fig:mf_161}
\end{minipage}
\end{figure}

Fig.~\ref{fig:ul_demands100} and Fig.~\ref{fig:ul_demands161} depict the results for \telu, and Fig.~\ref{fig:mf_100} and Fig.~\ref{fig:mf_161} for \temf.
We can make several interesting observations.
First, for the 100-node network {\em SP} and {\em GSP} perform the best under all settings in Fig.~\ref{fig:ul_demands100}. In contrast, for the 161-node topology in Fig.~\ref{fig:ul_demands161} {\em GSP} and {\em Degree} perform the best. When considering \temf, Fig.~\ref{fig:mf_100} shows that {\em GSP} and {\em Degree} perform better in the 100-node topology, while in the 161-node topology {\em GSP} performs best in Fig.~\ref{fig:mf_161}.
Thus, middlepoints chosen by group shortest path centrality consistently outperform those selected by other centralities in terms of TE performance.

The main advantage of {\em GSP} is that it selects a set of middlepoints whose combined power is strong. In particular, {\em SP} may select nodes that are individually strong but cover the same set of shortest paths; thus, when combined together these nodes result in poor performance since they share the same shortest paths and are unable to spread out the traffic. This is the reason why {\em GSP} performs consistently well, while the performance of {\em SP} can fluctuate from very strong as in Fig.~\ref{fig:ul_demands100} to very poor and even worse than {\em Random} as in Fig.~\ref{fig:ul_demands161}.

Second, {\em Random} performs the worst in Fig.~\ref{fig:ul_demands100}, Fig.~\ref{fig:mf_100}, and Fig.~\ref{fig:mf_161}, and it also performs badly in the 161-node network in Fig.~\ref{fig:ul_demands161}. This confirms our premise that centrality based middlepoint selection generally outperforms a naive random selection scheme. Indeed, {\em Random} does not utilize any topological information from the network. Further, {\em Random} fluctuates wildly, which makes it ill-fitted for practical use. As seen from the figures, {\em Random} has the largest standard deviations among all.

Third, we observe that the performance of {\em SP} can be worse than {\em Random} sometimes in Fig.~\ref{fig:ul_demands161}. Indeed, {\em SP} just greedily selects the top-$K$ shortest-path central nodes, even though in reality these nodes may share several shortest paths.
{\em Random}, on the other hand, can do better than {\em SP} in certain settings since it has a lower probability of choosing overlapping shortest paths.  

Another aspect of performance is the runtime of the TE LPs.
Table~\ref{table:LP_time_lu_161} and Table~\ref{table:LP_time_mf_161} show the average runtimes for \telu~and \temf~respectively.
{\em Random} consistently has the worst results. 
{\em SP} takes the least time but the difference between {\em SP}, {\em GSP}, and {\em Degree} is little.
All of the schemes can finish within 100 seconds even with 2000 flows, which demonstrates that centrality based segment routing can be practically used in large-scale networks.

%
%

\begin{table}[!h]
\centering
\resizebox{0.98\columnwidth}{!}{%
\begin{tabular}{|l|c|c|c|c|c|c|c|}
\hline
\backslashbox{Scheme}{\# flows}&1500&1750&2000&2250&2500&2750&3000 \\ \hline
Random&37.54&44.76&53.07&60.70&68.18&78.52&88.00 \\ \hline
SP&28.16&32.61&38.92&44.42&50.93&56.65&62.14 \\ \hline
Degree&30.81&35.75&41.40&47.68&54.56&61.99&68.01 \\ \hline
GSP&27.48&33.75&38.49&45.44&50.56&57.38&62.64 \\ \hline
\end{tabular}
}
\vspace{3mm}
\caption{Average LP time (seconds) of 161-node network with \telu.}
\label{table:LP_time_lu_161}
\vspace{-4mm}
\end{table}

\begin{table}[!h]
\centering
\resizebox{0.98\columnwidth}{!}{%
\begin{tabular}{|l|c|c|c|c|c|c|c|}
\hline
\backslashbox{Scheme}{\# flows}&1500&1750&2000&2250&2500&2750&3000 \\ \hline
Random&69.15&81.62&96.97&115.41&137.69&153.45&176.62 \\ \hline
SP&49.55&58.83&70.53&79.87&93.67&107.57&122.04 \\ \hline
Degree&53.79&64.79&76.04&90.49&102.63&112.53&129.58 \\ \hline
GSP&54.37&66.32&79.54&94.04&108.38&125.80&140.17 \\ \hline
\end{tabular}
}
\vspace{4mm}
\caption{Average LP time (seconds) of 161-node network with \temf.}
\label{table:LP_time_mf_161}
\vspace{-3mm}
\end{table}

The reason that {\em Random} has the longest runtime is that it selects nodes that are not central with possibly many distinct paths and links. This leads to more active optimization variables and constraints for the same LP, thus longer runtime. By the same token, the reason that {\em SP} has the lowest runtime is that it selects top-$K$ central nodes with many overlapping shortest paths. This results in fewer active links being used for routing, and thus fewer active optimization variables and constraints in the LP.

To summarize, based on the above experimental results and analysis, we find that {\em GSP} consistently delivers the best TE performance with the least LP time among all centralities we considered.

\subsection{Comparison of Weighted Centralities}
\label{sec:eva_weighted}

The centralities we have studied so far only considered the connectivity of the network topology.
As discuss in \cref{sec:2pass}, it is also possible to take into account the link capacity information by adding weights to links and using weighted versions of centralities.

\begin{figure}[htpb]
\centering
\begin{minipage}{0.49\linewidth}
   \includegraphics[width=1\linewidth]{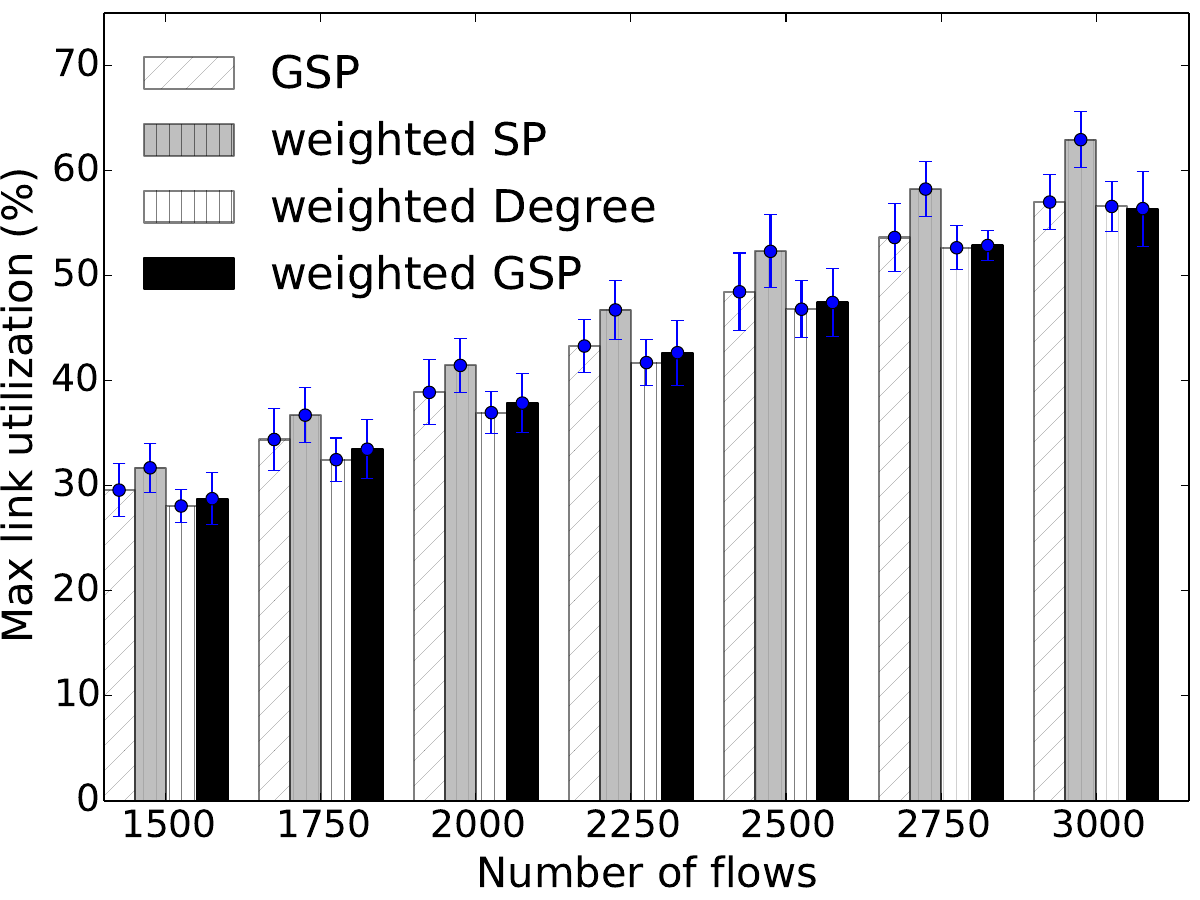}
      \caption{100-node network when $M$=1 and $K$=4 based on weighted centrality for \telu.}
      \label{fig:weighted_ul_100}
\end{minipage}
\begin{minipage}{0.49\linewidth}
\includegraphics[width=1\linewidth]{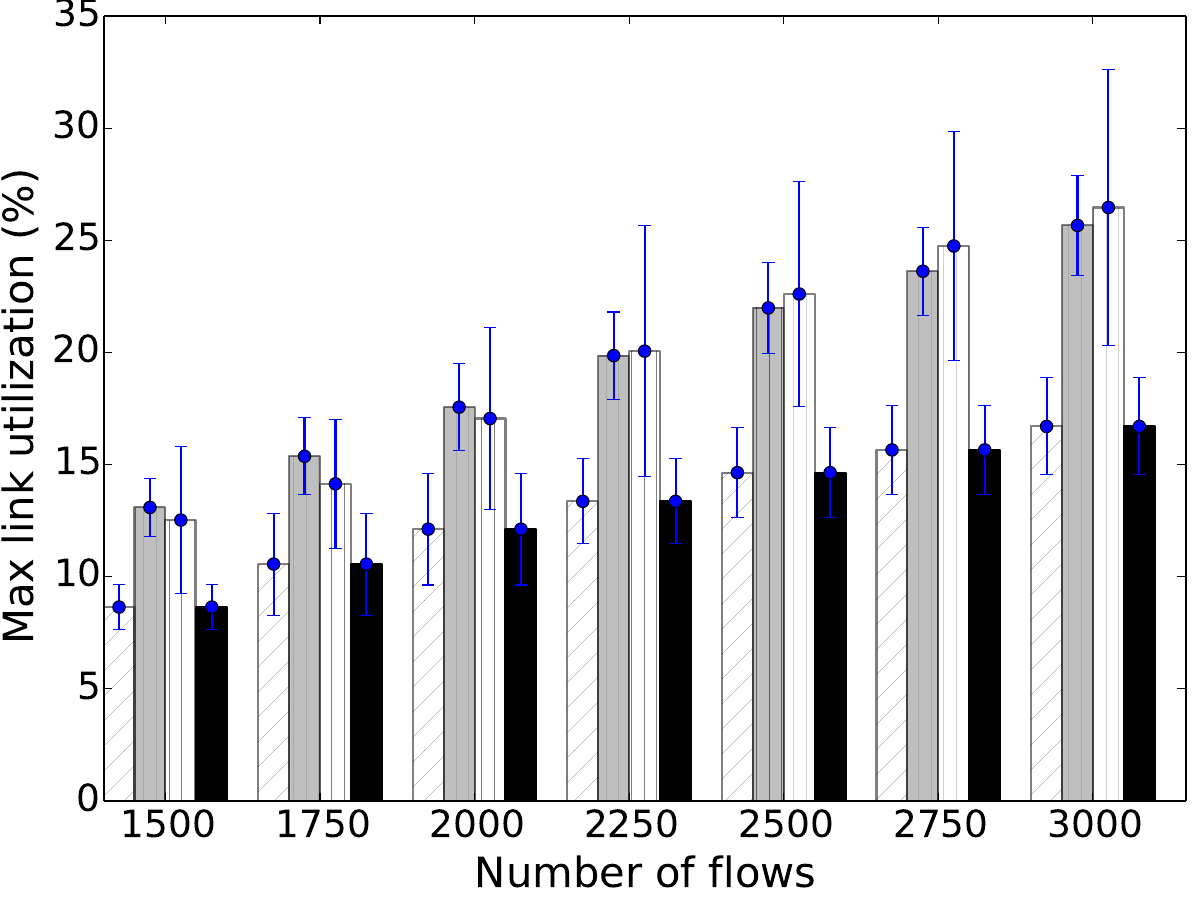}
      \caption{161-node network when $M$=1 and $K$=4 based on weighted centrality for \telu.}
      \label{fig:weighted_ul_161}
\end{minipage}
\end{figure}

\begin{figure}[htpb]
\centering
\begin{minipage}{0.49\linewidth}
   \includegraphics[width=1\linewidth]{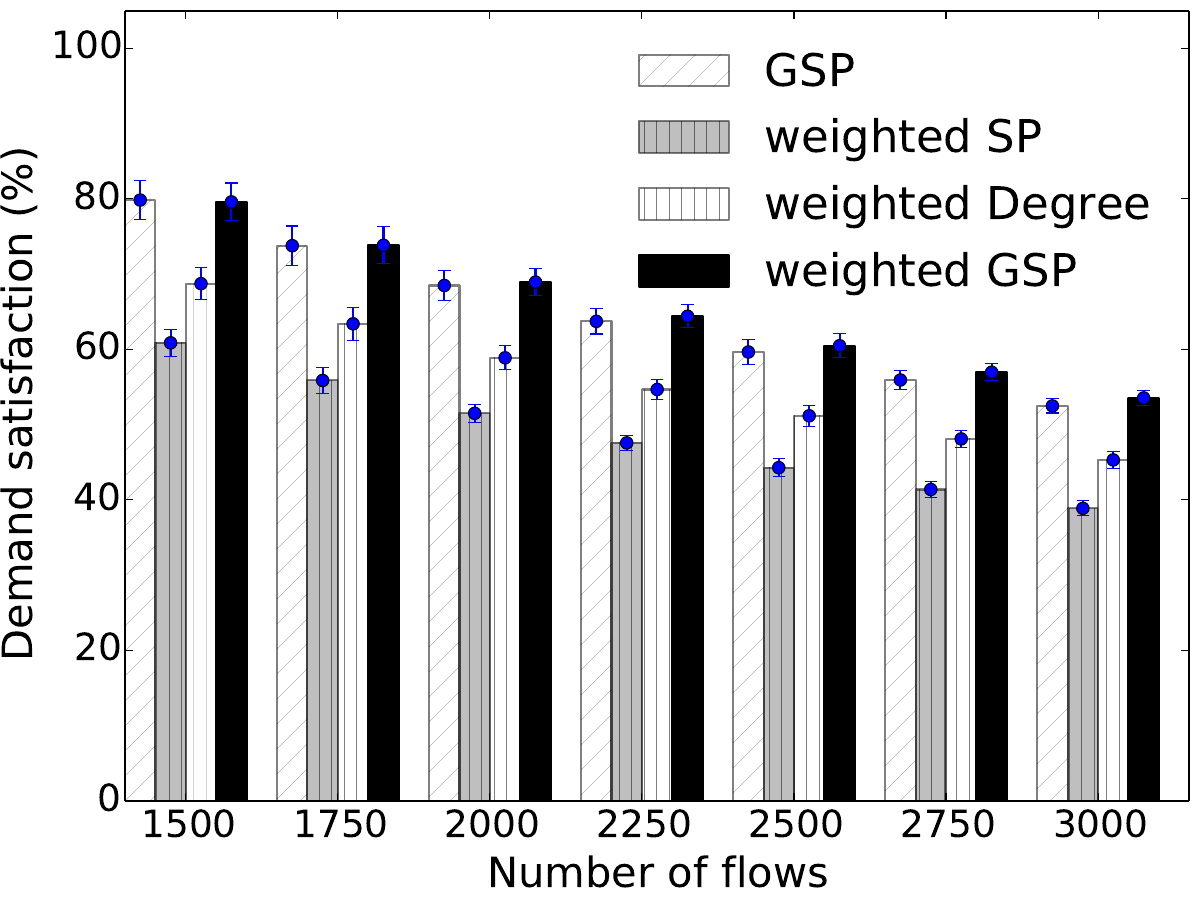}
      \caption{100-node network when $M$=1 and $K$=7 based on weighted centrality for \temf.}
      \label{fig:weighted_mf_100}
\end{minipage}
\begin{minipage}{0.49\linewidth}
\includegraphics[width=1\linewidth]{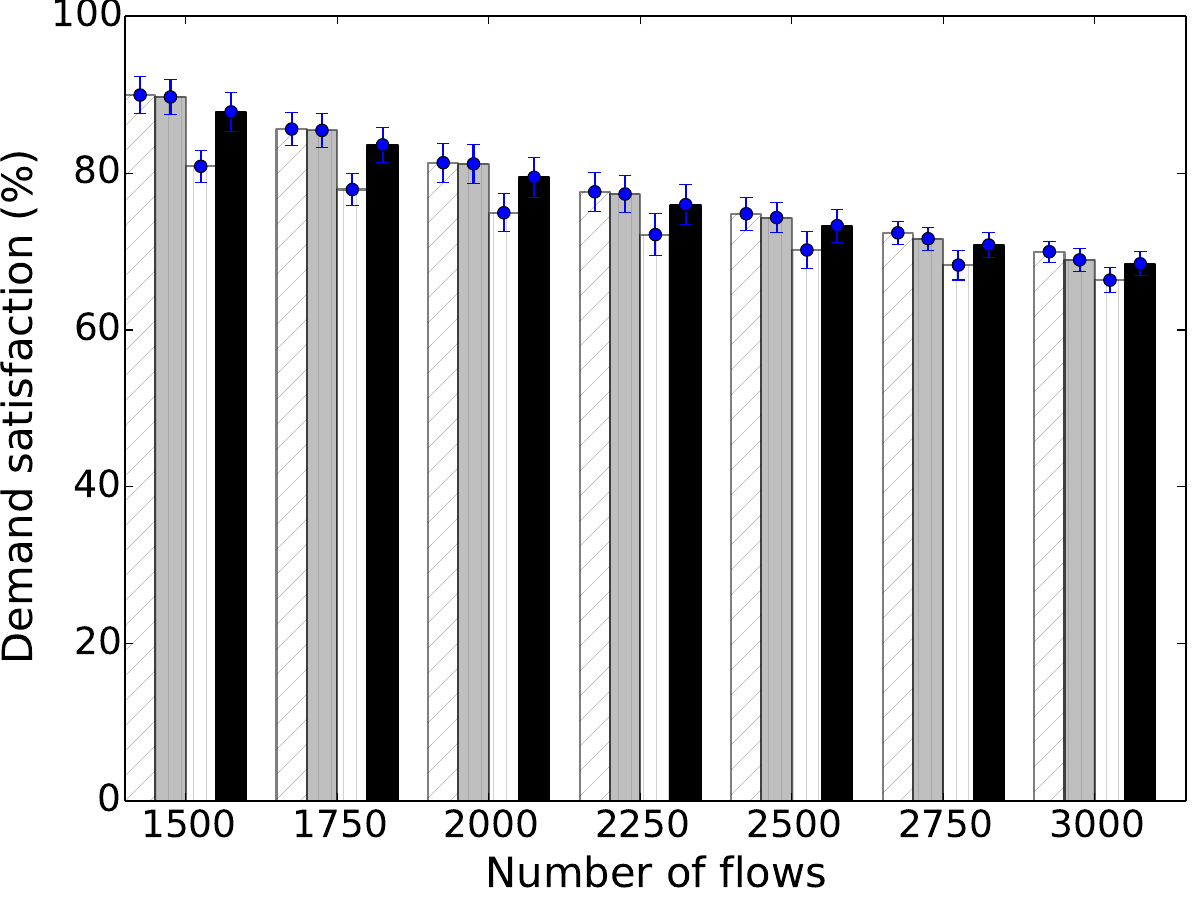}
      \caption{161-node network when $M$=1 and $K$=7 based on weighted centrality for \temf.}
      \label{fig:weighted_mf_161}
\end{minipage}
\end{figure}

We also carry out experiments to compare the performance of weighted {\em SP}, weighted {\em degree}, and weighted {\em GSP} centralities against {\em GSP}, the best centrality without using weights for middlepoint selection. Weighted here means that the three centrality based approaches are weighted by the capacity of each edge.
Fig.~\ref{fig:weighted_ul_100} and Fig.~\ref{fig:weighted_ul_161} show the performance comparison with \telu, while Fig.~\ref{fig:weighted_mf_100} and Fig.~\ref{fig:weighted_mf_161} show the comparison with \temf. For the 161-node topology, we observe that {\em GSP} and weighted {\em GSP} are always the best. In the 100-node topology, {\em GSP} is sometimes worse than weighted {\em Degree} and weighted {\em GSP} although the differences are very small. Therefore, {\em GSP} without weights is still the most effective and robust middlepoint selection method in all settings. 

\section{Related Work}
\label{sec:related}

We now review related work on segment routing other than those discussed already in \cref{sec:bg}. Segment routing is a relatively new concept with limited prior work. 
Aubry et al. \cite{aubry2016scmon} propose to use segment routing for continuous monitoring of the data plane of the network with a single box. Segment routing is used to force probe packets to traverse specific paths. Giorgetti et al. \cite{giorgetti2015path} propose algorithms for segment routing label stack computation that guarantee minimum label stack depth.

TE has been extensively studied in carrier networks \cite{EJLW01,KKDC05,WXQY06,FT00,HBCR07,HVSB15}, and has also attracted much attention recently in data center backbone WANs \cite{JKMO13,HKMZ13,LKMZ14,GHCR13} with software defined networking \cite{FRZ13}. End-to-end paths are usually used while we study segment routing here in TE. 

Finally, we note that graph centralities have been applied to routing in some specific SDN problems, such as in service chain embedding \cite{Lukovszki2015} and incremental SDN deployment \cite{Lukovszki2016,Levin2014}. In a service chain \cite{Lukovszki2015}, traffic needs to be steered through a set of waypoints, with the goal of admitting a maximum number of routes. In the context of hybrid and incremental SDN deployment \cite{Levin2014}, a set of middleboxes need to be deployed in order to serve a maximal number of flows, respecting flow rule constraints.
Solutions to these problems are based on degree centralities, and there exist greedy approximation algorithms exploiting submodularity as well \cite{Lukovszki2016}. Contrary to these works, our paper focuses on the theoretical fundamentals of TE using segment routing and on graph-theoretic practical middlepoint selection.
\section{Conclusion}
\label{sec:conclusion}

In this work, we studied practical traffic engineering with segment routing in SDN based WANs. We showed that TE for segment routing with shortest paths is (weakly) polynomial when the number of middlepoints per logical path is fixed and not part of the input. We also studied practical TE with shortest path based segment routing, and proposed to select just a few important nodes for all network traffic using graph theoretic centrality concepts. Our performance evaluation demonstrated that just a small percentage of powerful nodes can achieve good results at very low time complexities.






\bibliographystyle{IEEEtranS}
\bibliography{main}

\end{document}